\author{Pavel G. Naumov \and Jia Tao}
\title{The Budget-Constrained Functional Dependency}
\newtheorem{definition}{Definition}
\newtheorem{theorem}{Theorem}
\newtheorem{lemma}{Lemma}
\newtheorem{claim}{Claim}
\newtheorem{corollary}{Corollary}
\newtheorem{proposition}{Proposition}
\newenvironment{proof}{\noindent{\sf Proof.}}{\hfill $\boxtimes$\linebreak}
\renewcommand{\phi}{\varphi}
\newcommand{\<}{\langle}
\renewcommand{\>}{\rangle}
\newcommand{\Section}{Section}
\newcommand{\idot}{. }
\begin{document}

\maketitle

\begin{abstract}
Armstrong's axioms of functional dependency form a well-known logical system that captures properties of functional dependencies between sets of database attributes. This article assumes that there are costs associated with attributes and proposes an extension of Armstrong's system for reasoning about budget-constrained functional dependencies in such a setting.

The main technical result of this article is the completeness theorem for the proposed logical system. Although the proposed axioms are obtained by just adding cost subscript to the original Armstrong's axioms, the proof of the completeness for the proposed system is significantly more complicated than that for the Armstrong's system.
\end{abstract}

\section{Introduction}

In dependency theory, functional dependencies are often used not only as a description of the data, but also as a semantic constraints for database designs, see \cite{v85}.
\cite{a74} introduced a system of three axioms describing the properties of functional dependencies between sets of attributes in a database.  
The applicability of these axioms goes far beyond the domain of databases. They describe the properties of functional dependency between any two sets of values. For example, knowing sides $a$ and $b$ of a triangle and the angle $\gamma$ between them, one can determine the third side $c$ and two other angles $\alpha$ and $\beta$. We write this as $a,b,\gamma\rhd c,\alpha, \beta$. Yet, knowing two sides of the triangle and the angle {\em not} between them, one cannot determine the remaining site and angles:  $\neg(a,b,\alpha\rhd c,\beta,\gamma)$. 

The property  $a,b,\gamma\rhd c,\alpha, \beta$ is valid when $a,b$ and $c$ are three sides of a triangle and $\alpha$, $\beta$, $\gamma$ are the angles opposite to these sides respectively. However, it may not be valid under some other interpretation of variables $a,b,c,\alpha, \beta$, and $\gamma$. For example, it is not valid if $a,b$ and $c$ are three sides of a pentagon and $\alpha$, $\beta$, $\gamma$ are the opposite angles. Armstrong's axioms capture the most general properties of functional dependencies that are valid in all settings. These axioms are:
\begin{enumerate}
\addtolength{\itemindent}{2mm}
\item[(A1)] {\em Reflexivity}: $A\rhd B$, if $B\subseteq A$,
\item[(A2)] {\em Augmentation}: $A\rhd B \rightarrow A,C\rhd B,C$,
\item[(A3)] {\em Transitivity}: $A\rhd B \rightarrow (B\rhd C \rightarrow A\rhd C)$,
\end{enumerate}
where $A,B$ denotes the union of sets $A$ and $B$. Armstrong proved the soundness and the completeness of this logical system with respect to a database semantics.
The above axioms became known in database literature as Armstrong's axioms, see \citet[p.~81]{guw09}. Beeri, Fagin, and Howard~\cite{bfh77} suggested a variation of Armstrong's axioms that describes properties of multi-valued dependency. \cite{v07} proposed a first order version of these principles. \cite{nn14jpl} developed a similar set of axioms for what they called the {\em rationally} functional dependency.

There have been two different approaches to extending Armstrong's axioms to handle approximate reasoning. \cite{bv06dsaa} described a complete logical system that formally captures the relation that {\em approximate values of attributes in set $A$ functionally determine approximate values of attributes in set $B$}. In his upcoming work, \cite{v14arxiv} considered the relation {\em attributes in set $A$ determine attributes in set $B$ with exception of $p$ fraction of cases}. We denote this relation by $A\rhd_p B$. For example, $A\rhd_{0.05} B$ means that attributes in set $A$ determine attributes in set $B$ in all but 5\% of the cases. \cite{v14arxiv} proposed a complete axiomatic system for this relation, which is based on the following principles:

\begin{enumerate}
\item Reflexivity: $A\rhd_0 B$, where $B\subseteq A$,
\item Totality: $A\rhd_1 B$,
\item Weakening: $A\rhd_p C,D\to A,B\rhd_p C$,
\item Augmentation: $A\rhd_p B \rightarrow A,C\rhd_p B,C$,
\item Transitivity: $A\rhd_p B\to(B\rhd_q C\to A\rhd_{p+q}C)$, where $p+q\le 1$,
\item Monotonicity: $A\rhd_p B\to A\rhd_q B$, where $p\le q$.
\end{enumerate}

Note that {V{\"a}{\"a}n{\"a}nen}'s relation $A\rhd_p B$ is exactly the original Armstrong's functional dependency relation when $p=0$. In the case of an arbitrary $p$, relation $A\rhd_p B$ could be considered as a ``weaker" form of functional dependency, which might hold even in the cases where the functional dependency does not hold. 

In this article we propose another interpretation of atomic predicate $A\rhd_p B$, that we call {\em the budget-constrained dependency}. Just like {V{\"a}{\"a}n{\"a}nen}'s approximate dependency, the budget-constrained dependency is a weaker form of the original Armstrong's functional dependency relation. Intuitively, there is a budget-constrained dependency $A\rhd_p B$ when just a few new attributes could be added to the set $A$ in such a way that the extended set functionally determines the set $B$. We use parameter $p$ to formally specify what the phrase ``a few new attributes" means. Namely, we assume that a non-negative cost is assigned to each attribute and that $A\rhd_p B$ means that there is a way to add several attributes with a total cost no more than $p$ to set $A$ in such a way that the extended set of attributes functionally determines all attributes in set $B$.  In this article we introduce a complete logical system for the budget-constrained dependency which is based on the following three principles that generalize Armstrong's axioms:
\begin{enumerate}
\item {\em Reflexivity}: $A\rhd_p B$, if $B\subseteq A$,
\item {\em Augmentation}: $A\rhd_p B \rightarrow A,C\rhd_p B,C$,
\item {\em Transitivity}: $A\rhd_p B \rightarrow (B\rhd_q C \rightarrow A\rhd_{p+q} C)$.
\end{enumerate}

We call our framework ``the budget-constrained dependency" because its most natural application is in the setting where all data is potentially accessible to an agent at a cost. In this setting $A\rhd_p B$ means that if an agent already knows the values of attributes in set $A$, then she can determine the values of attributes in set $B$ at a cost $p$. One example of such a setting is fees associated with information access: criminal background check fees, court records obtaining fees, etc. Another example is geological explorations, where learning about deposits of mineral resources often requires costly drilling. Although it is convenient to think about a budget constraint as a financial one, a budget constraint can also refer to a limit on time, space, or some other resource.

Although Reflexivity (A1), Augmentation (A2), and Transitivity (A3) are usually called Armstrong's ``axioms", technically they can be formalized either as inference rules or as axioms. In the former case the language of the system consists only of the atomic predicate of the form $A\rhd_p B$. In the latter case the language consists of all Boolean combinations of such predicates and the formal logical system also includes propositional tautologies and the Modus Ponens inference rule.  \cite{a74} original paper proves the completeness of (A1), (A2), and (A3) as inference rules. However, his proof can be easily modified to prove the completeness of the corresponding system of axioms, as in \cite{hn14wollic}.

\cite{bv06dsaa} and \cite{v14arxiv} also proved the completeness theorems for logical systems consisting of inference rules. Different from their approach, in this article we treat our version of Armstrong's axioms as propositional axioms. Since inference rule
$$
\dfrac{\phi_1,\phi_2,\dots, \phi_n}{\psi}
$$
could be interpreted as a propositional formula
$$
\phi_1\to(\phi_2\to\dots (\phi_n\to\psi)\dots),
$$
an axiom-based system is syntactically richer than the corresponding rule-based system. Thus, the completeness result for an axiom-based system obtained in this article is stronger than a potential claim of the completeness result for the corresponding rule-based system.  

The main result of this article is the completeness theorem for our logical system. Since the system essentially consists of three Armstrong's axioms labeled with budget constrains, one might expect the proof of the completeness to be a straightforward modification of the proof of the completeness for the original Armstrong's system. Surprisingly, the proof of the completeness in our case requires a much more sophisticated argument. 

Next, we explain the reason for this unexpected complexity. The completeness theorem states that any unprovable formula is not satisfied in at least one model. Thus, to show the completeness we need to be able to construct a model (or a ``counterexample") for each unprovable formula. 
For instance, consider the formula $a\rhd b\to b\rhd a$ in the language without budget constraints. To construct a counterexample for this formula we need to describe a model in which attribute $a$ functionally determines attribute $b$ but not vice versa. To describe such a model, one can think of attributes $a$ and $b$ as paper folders that are used to store copies of certain documents. Specifically, consider a model in which folder $a$ always stores a copy of document $X$ and folder $b$ is always kept empty, see Figure~\ref{Folders-1 figure}. In this model, based on the content of folder $a$ one can always vacuously recover the content of empty folder $b$. At the same time, based on the content of empty folder $b$ one cannot recover the content of folder $a$. Thus, formula $a\rhd b\to b\rhd a$ is false in this model.

\begin{figure}[ht]
\begin{center}
\vspace{0mm}
\scalebox{.5}{\includegraphics{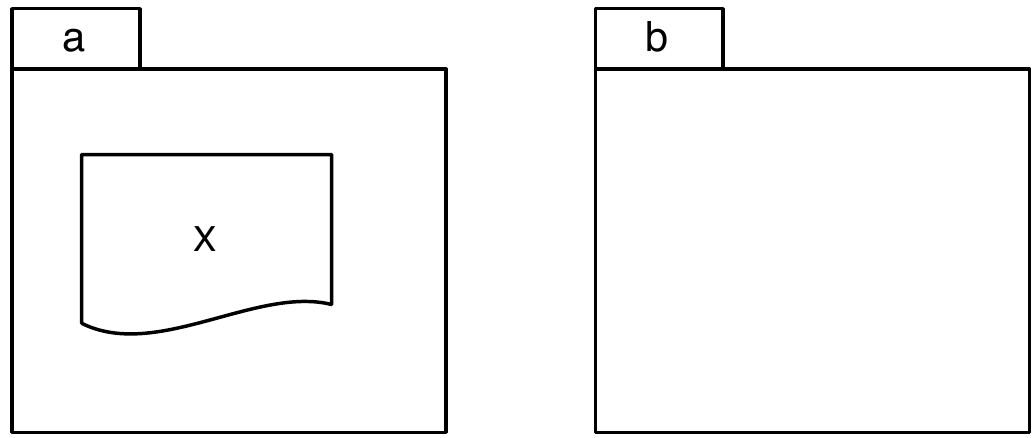}}
\vspace{0mm}
\footnotesize
\caption{Formula $a\rhd b$ is true, but formula $b\rhd a$ is false.}\label{Folders-1 figure}
\vspace{0cm}
\end{center}
\vspace{0mm}
\end{figure}

To construct a counterexample for formula $a\rhd b \vee b\rhd a$, one can consider a model in which folders $a$ and $b$ containing copies of two different (and unrelated to each other) documents $X$ and $Y$ respectively, see Figure~\ref{Folders-2 figure}. 

\begin{figure}[ht]
\begin{center}
\vspace{0mm}
\scalebox{.5}{\includegraphics{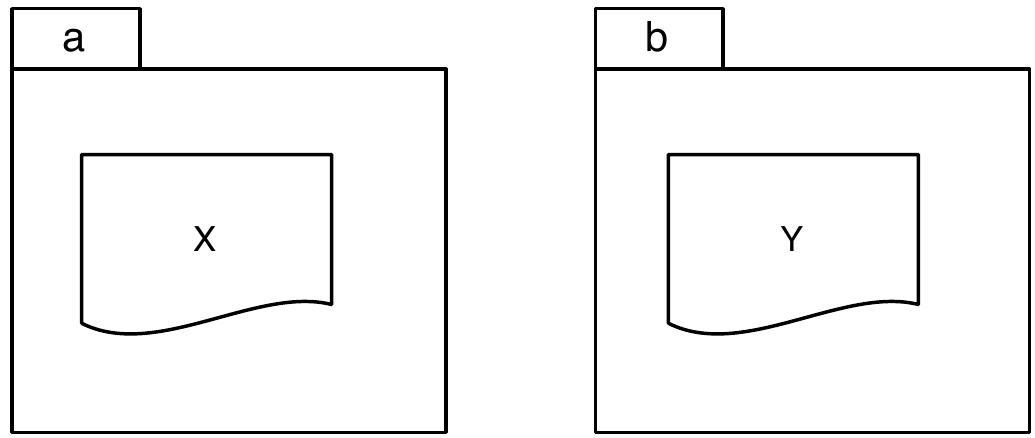}}
\vspace{0mm}
\footnotesize
\caption{Formulas $a\rhd b$ and $b\rhd a$ are both false.}\label{Folders-2 figure}
\vspace{0cm}
\end{center}
\vspace{0mm}
\end{figure}

To construct counterexamples for more complicated formulas, one can consider models with multiple folders containing copies of multiple documents. An example of such a model is depicted in Figure~\ref{Folders-3 figure}. In this model $a,b\rhd c$ is true because anyone with access to folders $a$ and $b$ knows the content of folder $c$. The folder/document model informally described here is sufficiently general to create a counterexample for each formula unprovable from Armstrong's axioms. In fact, the original Armstrong's proof of the completeness for his rule-based system and the proof of the completeness for the corresponding axiom-based system (\cite{hn14wollic}) could be viewed as formalizations of this folder/document construction. 

\begin{figure}[ht]
\begin{center}
\vspace{0mm}
\scalebox{.5}{\includegraphics{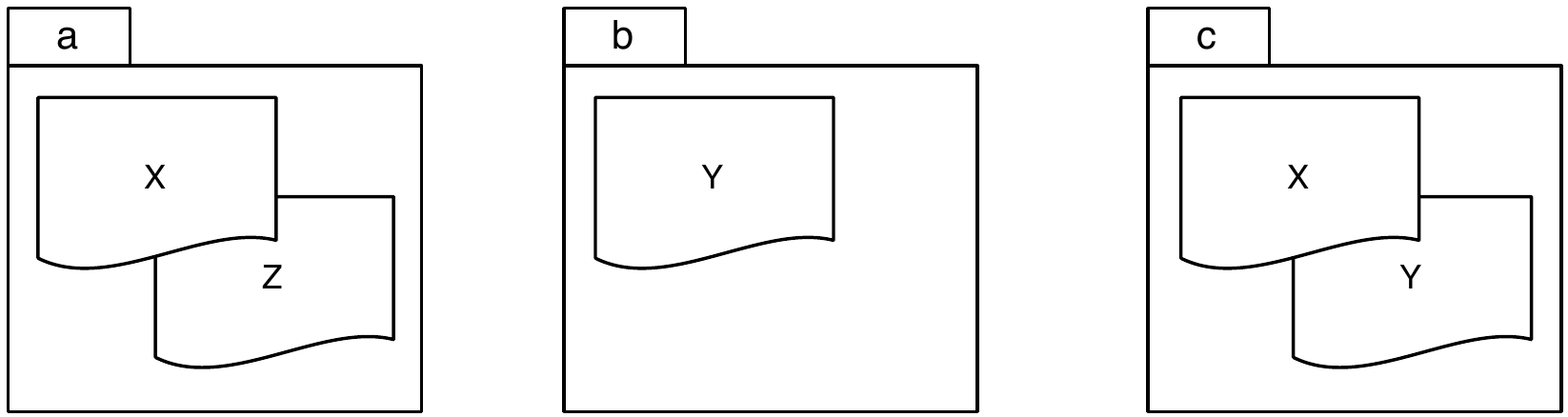}}
\vspace{0mm}
\footnotesize
\caption{Formula $a,b\rhd c$ is true.}\label{Folders-3 figure}
\vspace{0cm}
\end{center}
\vspace{0mm}
\end{figure}

The situation becomes significantly more complicated once the cost of information is added to the language. Let us start with a very simple example. If we want to construct a counterexample for formula $a\rhd_4 b$, then we can consider a model depicted in Figure~\ref{Folders-4 figure} with two folders: $a$ and $b$, priced at $\$3$ and $\$5$, respectively. The first of these folders is always empty and the second contains a copy of the document $X$. It is clear that in this model anyone
 who knows the content of folder $a$ still needs to spend $\$5$ to learn the content of folder $b$. Thus, budget-constrained dependency $a\rhd_p b$ is not satisfied in this model for each $p< 5$.

\begin{figure}[ht]
\begin{center}
\vspace{0mm}
\scalebox{.5}{\includegraphics{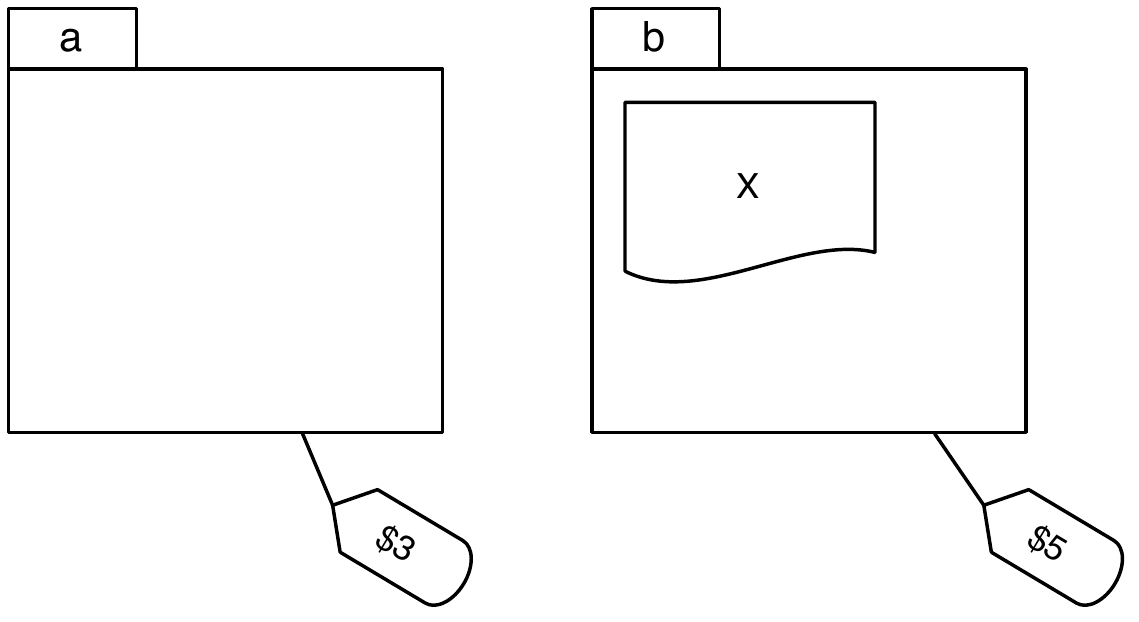}}
\vspace{0mm}
\footnotesize
\caption{Formula $a\rhd_4 b$ is false.}\label{Folders-4 figure}
\vspace{0cm}
\end{center}
\vspace{0mm}
\end{figure}

Let us now consider a more interesting example. Suppose that we want to construct a counterexample for formula $a\rhd_4 b\to\varnothing\rhd_4 b$. That is, we want to construct a model where anyone who knows the content of folder $a$ can reconstruct the content of folder $b$ after spending at most $\$4$. Yet, the same can not be done without access to folder $a$. To construct such a model we use the cryptographic tool called one-time encryption pad. Our model consists of three folders $a$, $b$, and $c$ priced at $\$3$, $\$5$, and $\$4$, respectively, see Figure~\ref{Folders-5 figure}. Let folder $b$ contain a copy of a document $X$, folder $c$ contain encryption an pad $P$, and folder $a$ contain the encrypted version of the document. In this model, $\varnothing\rhd_4 b$ is false because $\$4$ buys either access to the encryption pad in folder $c$ or access to the encrypted text in folder $a$, but not both. However,  formula $a\rhd_4 b$ is true in the same model because anyone who knows encrypted text $Encrypt(X,P)$ can spend $\$4$ on pad $P$, decode message $X$, and thus, learn the content of folder $b$.

\begin{figure}[ht]
\begin{center}
\vspace{0mm}
\scalebox{.5}{\includegraphics{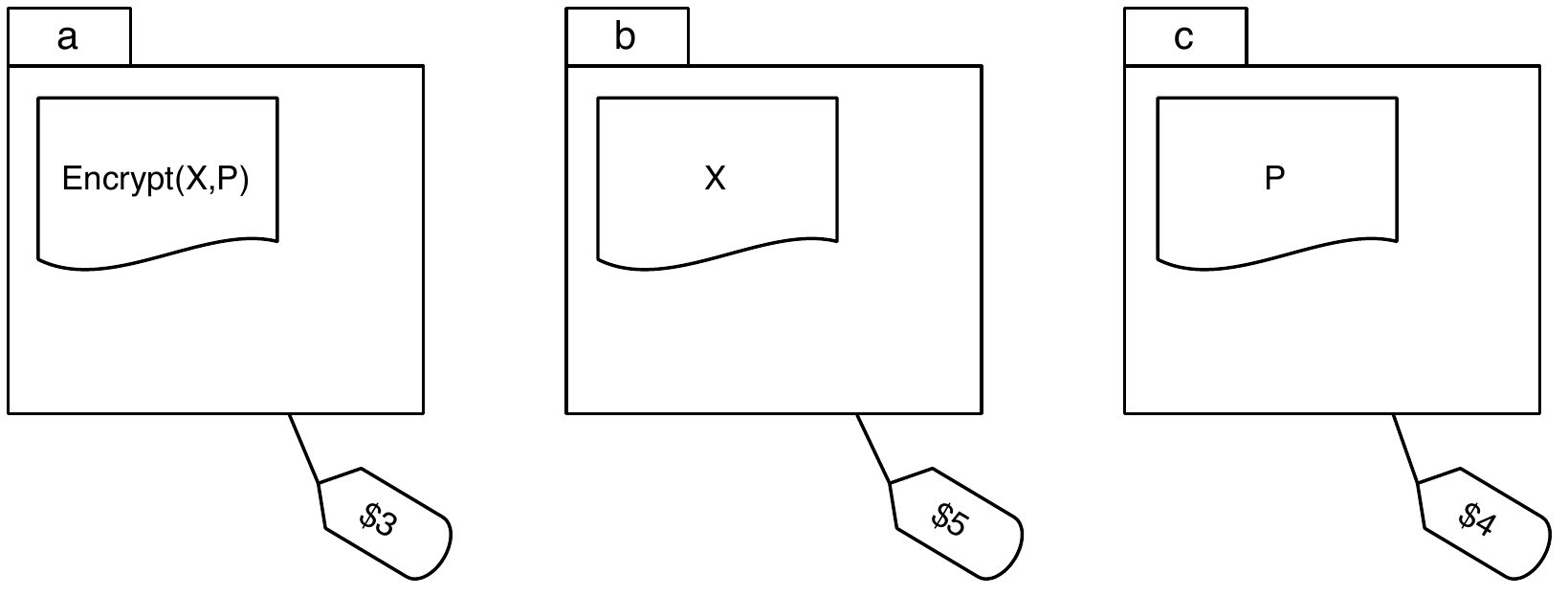}}
\vspace{0mm}
\footnotesize
\caption{Formula $a\rhd_4 b\to \varnothing\rhd_4 b$ is false.}\label{Folders-5 figure}
\vspace{0cm}
\end{center}
\vspace{0mm}
\end{figure}

The one-time pad encryption is known in cryptography as a symmetric-key algorithm because the same key (i.e. the one-time pad) could be used to encrypt and to decrypt the text. As a result, in the model depicted in Figure~\ref{Folders-5 figure}, not only formula $a\rhd_4 b$ is true, but formula $b\rhd_4 a$ is true as well. 

For the next example, we construct a counterexample for formula 
$$a\rhd_4 b\to (\varnothing\rhd_4 b \vee b\rhd_4 a).$$
This is an easier task than one might think because one just needs to modify the previous model by adding add to the folder $a$ some extra document not related to the document $X$ and to raise the price of this folder, see Figure~\ref{Folders-6 figure}. This guarantees that the only way to learn all the content of folder $a$ is to buy folder $a$ directly.

\begin{figure}[ht]
\begin{center}
\vspace{0mm}
\scalebox{.5}{\includegraphics{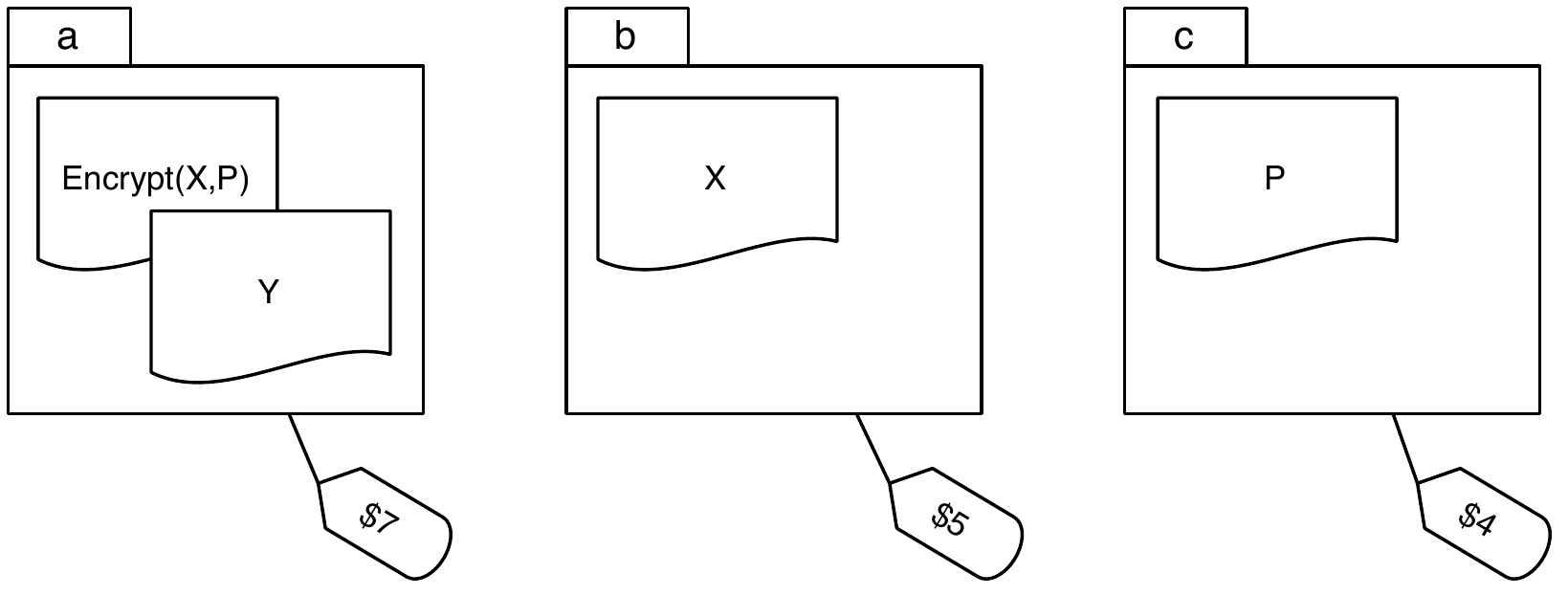}}
\vspace{0mm}
\footnotesize
\caption{Formula $a\rhd_4 b\to (\varnothing\rhd_4 b \vee b\rhd_4 a)$ is false.}\label{Folders-6 figure}
\vspace{0cm}
\end{center}
\vspace{0mm}
\end{figure}

The situation becomes much more complicated if we want (i) the value of attribute $a$ to be recoverable from the value of attribute $b$ and (ii) the value of attribute $b$ to be recoverable from the value of attribute $a$, but at a different price. In other words, we want to construct a counterexample of the following formula:

\begin{equation}\label{big example}
a\rhd_1 b \wedge b\rhd_5 a\to (\varnothing\rhd_5 a \vee \varnothing\rhd_1 b \vee b\rhd_4 a).
\end{equation}
At first glance, this goal could be achieved using the asymmetric key cryptography, commonly used in the public-key encryption. For instance, suppose that folder $a$ contains a document $X$ and folder $b$ contains the same document encrypted with an encryption key $k_e$, see Figure~\ref{Folders-8 figure}. To obtain the content of folder $b$ based on the content of folder $a$, one only needs to know the encryption key $k_e$. To restore the content of folder $a$ based on folder $b$ one needs to know the value of the decryption\footnote{In public-key cryptography, an encryption key is known as the public key and a decryption key as the private key. We do not use these terms here because in our setting neither of the keys is public in the sense that both of them have associated costs.} key $k_d$. If the encryption key and the decryption key are priced at \$1 and \$5 respectively, the formula $b\rhd_4 a$ is not satisfied from the cryptographic point of view. Since folders $a$ and $b$ are priced in this model at \$100 each, formulas $\varnothing\rhd_5 a$ and $\varnothing\rhd_1 b$ are not satisfied either. Thus, the entire formula (\ref{big example}) is not satisfied from the cryptographic point of view.

\begin{figure}[ht]
\begin{center}
\vspace{0mm}
\scalebox{.5}{\includegraphics{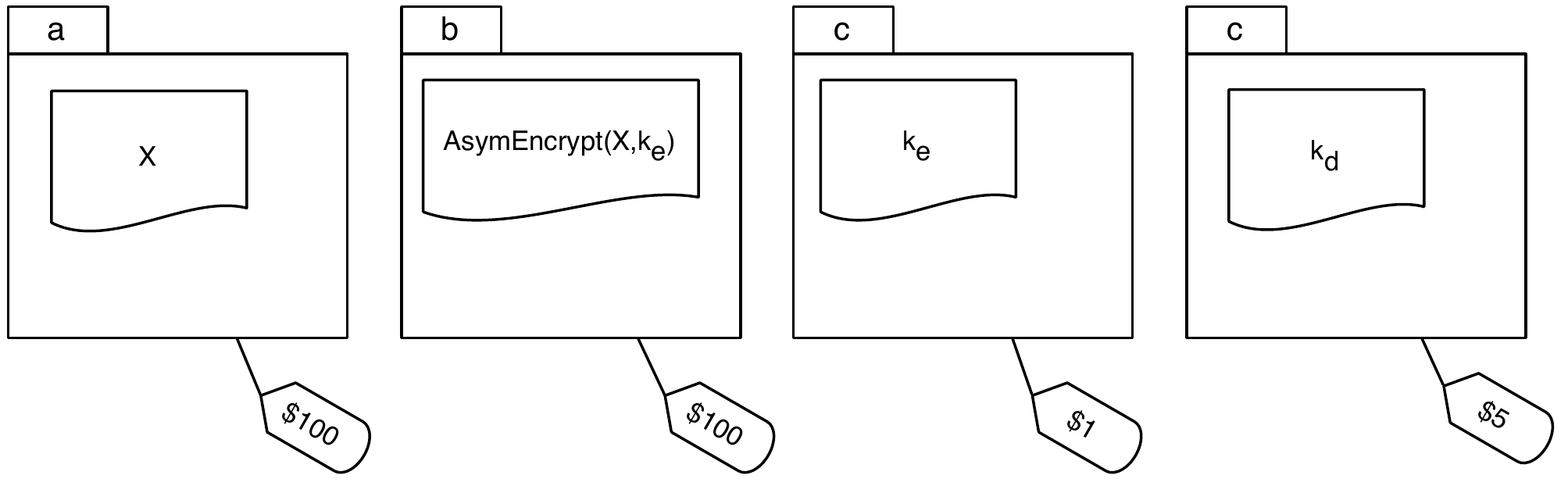}}
\vspace{1mm}
\footnotesize
\caption{Formula $a\rhd_1 b \wedge b\rhd_5 a\to (\varnothing\rhd_5 a \vee \varnothing\rhd_1 b \vee b\rhd_4 a)$ is false.}\label{Folders-8 figure}
\vspace{0cm}
\end{center}
\vspace{0mm}
\end{figure}

Note, however, that cryptographic asymmetric-key algorithms are only polynomial time secure and the proof of polynomial time security requires an appropriate computational hardness  assumption~\cite[Ch. 2]{k10}. In other words, in public-key cryptography, the encrypted text can be decrypted using only the public encryption key if one has exponential time for the decryption. Neither~\cite{a74} definition of functional dependency  nor our definition of budget-constrained functional dependency, given in Definition~\ref{info sat} below, assumes any upper bound on computability of the functional dependency. From our point of view, one would be able to eventually restore the content of folder $a$ based on folder $b$ by spending \$1 on the content of folder $c$. Thus, in the above setting, without polynomial restriction on computability, not only formula $b\rhd_4 a$ is true, but formula $b\rhd_1 a$ is true as well.

\begin{figure}[ht]
\begin{center}
\vspace{0mm}
\scalebox{.5}{\includegraphics{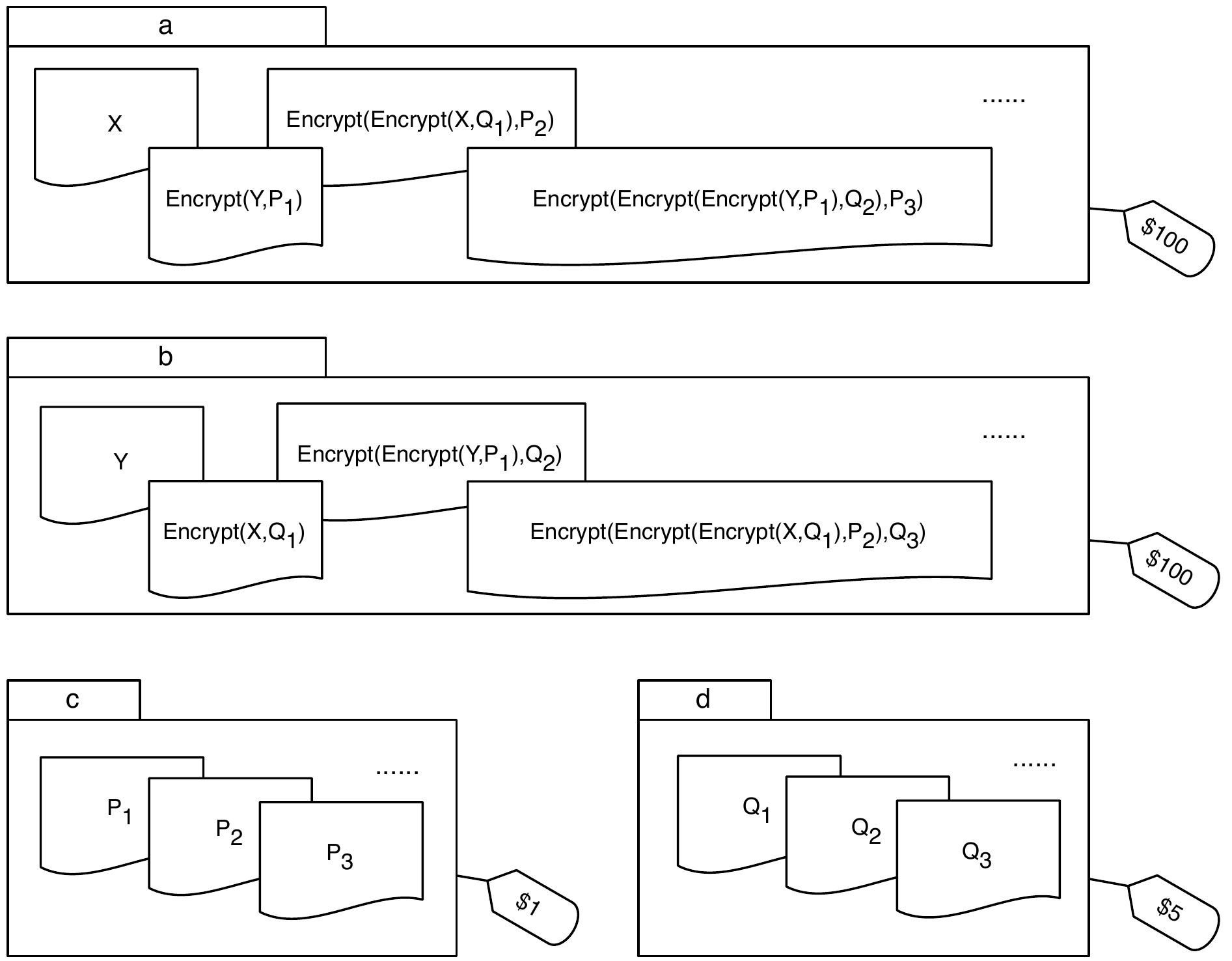}}
\vspace{1mm}
\footnotesize
\caption{Formula $a\rhd_1 b \wedge b\rhd_5 a\to (\varnothing\rhd_5 a \vee \varnothing\rhd_1 b \vee b\rhd_4 a)$ is false.}\label{Folders-7 figure}
\vspace{0cm}
\end{center}
\vspace{0mm}
\end{figure}

Figure~\ref{Folders-7 figure} shows a counterexample for statement (\ref{big example}) that does not require functional dependency to be polynomial time computable. Assume that folders $a$ and $b$ contain copies of unrelated documents $X$ and $Y$, folder $c$ contains an infinite supply of one-time encryption pads $P_1,P_2,P_3,\dots$ and folder $d$ contains another infinite set of one-time encryption pads $Q_1,Q_2,Q_3,\dots$. First, encrypt document $Y$ with one-time pad $P_1$ and place a copy of the resulting cyphertext $Encrypt(Y,P_1)$ into folder $a$. Next, encrypt $Encrypt(Y,P_1)$ with pad $Q_2$ and place a copy of the resulting cyphertext $Encrypt(Encrypt(Y,P_1),Q_2)$ into folder $b$. Then, encrypt $Encrypt(Encrypt(Y,P_1),Q_2)$ with pad $P_3$ and place a copy of the resulting cyphertext $Encrypt(Encrypt(Encrypt(Y,P_1),Q_2),P_3)$ into folder $a$, and so on ad infinitum. Perform similar steps with the document $X$, as shown in Figure~\ref{Folders-7 figure}.

To show that the model depicted in Figure~\ref{Folders-7 figure} is a counterexample for formula (\ref{big example}), we need to prove that both formulas $a\rhd_1 b$ and $b\rhd_5 a$ are satisfied in this model and each of the formulas $\varnothing\rhd_5 a$, $\varnothing\rhd_1 b$,  and $b\rhd_4 a$ is not satisfied. First, notice that formula $a\rhd_1 b$ is satisfied because folder $a$ contains all documents in folder $b$ encrypted with one-time pads $P_1,P_2,\dots$ and that all these pads could be acquired for \$1 by buying folder $c$. Second, formula $b\rhd_5 a$ is satisfied for a similar reason using pads $Q_1,Q_2,\dots$. Third, formula $\varnothing\rhd_5 a$ is not satisfied because for \$5 one can only buy either folder $c$ or folder $d$, both containing only one-time pads. In the absence of folder $b$, one-time encryption pads can not be used to recover document $X$ stored in folder $a$. Formula $\varnothing\rhd_1 b$ is not satisfied for a similar reason. Finally, $b\rhd_4 a$ is not satisfied because \$4 is not enough to buy the content of folder $d$.  This amount of money can only be used to buy pads $P_1,P_2,\dots$ in folder $c$. Knowing the content of folder $b$ and one-time pads $P_1,P_2,\dots$, one can not recover document $X$ contained in folder $a$.

In this article we prove the completeness of our logical system. At the core of this proof is a generalized version of the construction presented in Figure~\ref{Folders-7 figure}.

The axiomatic system proposed in this article is related to other logic system for reasoning about bounded resources.
The classical logical system for reasoning about resources is the linear logic of ~\cite{g87tcs}. 
\cite{al02aamas} presented a family of logical systems for reasoning about beliefs of a perfect reasoner that only can derive consequences of her beliefs after some time delay. This approach has been further developed into the multi-agent Timed Reasoning Logic in~\cite{alw04aamas}. 
\cite{bf10clima} proposed Resource-Bounded Tree Logics for reasoning about resource-bounded computations and obtained preliminary results on the complexity and decidability of model checking for these logics.
\cite{alnr11jlc} incorporated resource requirements into Coalition logic and gave a sound and complete axiomatization of the resulting system. 
Another logical system for reasoning about knowledge under bounded resources was proposed by~\cite{jt13clima}. Their article focuses on the expressive power of the language of the system and the model checking algorithm. 
Our previous work \cite{nt15aamas} introduced a sound and complete modal logic for reasoning about budget-constrained knowledge.
Unlike our current system all of the above logics do not provide a language for expressing functional dependencies.

This paper is organized as follows. In \Section~\ref{syntax and semantics section} we formally define the language of our logical system and its informational semantics. In \Section~\ref{axioms section} we list the axioms of the system that have already been discussed in the introduction. In \Section~\ref{examples section} we give several examples of formal proofs in our logical system. In \Section~\ref{soundness section} we prove the soundness of our axioms with respect to the informational semantics. In \Section~\ref{hypergraph section} we introduce an auxiliary hypergraph semantics of our logical system and prove the completeness with respect to this semantics. In \Section~\ref{info semantics completeness} we use the result obtained in the previous section to prove the completeness of our logical system with respect to the informational semantics. \Section~\ref{finite info semantics completeness} strengthens the informational completeness results by proving the completeness with respect to a more narrow class of {\em finite} informational models. \Section~\ref{conclusion section} concludes the article.

\section{Syntax and Semantics}\label{syntax and semantics section}

In this section we introduce the language of our system and formally describe its intended semantics that we call {\em informational semantics}. Later on we introduce an auxiliary {\em hypergraph semantics} used as a technical tool in the proof of the completeness with respect to the original informational semantics.

\begin{definition}\label{Phi}
For any set $\mathcal{A}$, let language $\Phi(\mathcal{A})$ be the minimum set of formulas such that
\begin{enumerate}
%\item $\bot\in\Phi(\mathcal{A})$,
\item $A\rhd_p B\in \Phi(\mathcal{A})$ for all finite sets $A,B\subseteq \mathcal{A}$ and all non-negative real numbers $p$,
\item if $\phi\in\Phi(\mathcal{A})$, then $\neg\phi\in\Phi(\mathcal{A})$,
\item if $\phi,\psi\in\Phi(\mathcal{A})$, then $\phi\to\psi\in\Phi(\mathcal{A})$.
\end{enumerate}
\end{definition}

Next, we introduce the formal informational semantics of our logical system. The only significant difference between our definition and the one used in~\cite{a74} is the costs function $\|\cdot\|$ that assigns a non-negative cost to each attribute. Note that we assume that the cost is assigned to an attribute, not to its value. For example, if we assign a certain cost to a folder with documents, then this cost is uniform and does not depend on the content of the documents in this folder.

\begin{definition}\label{info model}
An informational model is a tuple $\<\mathcal{A},\{D_a\}_{a\in \mathcal{A}},\|\cdot\|,\mathcal{L}\>$, where
\begin{enumerate}
\item $\mathcal{A}$ is an arbitrary set of ``attributes",
\item $D_a$ is a set representing the domain of attribute $a\in \mathcal{A}$,
\item $\|\cdot\|$ is a cost function that maps each attribute $a\in \mathcal{A}$ into a non-negative real number or infinity $+\infty$,
\item $\mathcal{L}\subseteq \prod_{a\in\mathcal{A}}D_a$ is the set of ``legitimate" vectors of attribute values under the constraints imposed by the informational model.
\end{enumerate}
\end{definition}
In the example depicted in Figure~\ref{Folders-7 figure}, folders are attributes and the information stored in the documents contained in a folder is a value of such an attribute. The set of all possible values of an attribute is its domain. The cost of different attributes is specified explicitly in Figure~\ref{Folders-7 figure}. Note that there is a certain dependency between the plaintext, one-time encryption pads, and the cyphertext. In other words, not all combinations of values of different attributes are possible. The set $\mathcal{L}$ is the set of possible, or ``legitimate", combinations of these values.

We allow the cost $\|a\|$ of an attribute $a$ to be infinity. Informally, one can interpret this as attribute $a$ 
not being available for purchase at any cost. If all attributes are available for sale, then we say that the informational model is {\em finite}. 
\begin{definition}\label{finite model}
Informational model $\<\mathcal{A},\{D_a\}_{a\in \mathcal{A}},\|\cdot\|,\mathcal{L}\>$ is finite if $\|a\|<+\infty$ for each $a\in\mathcal{A}$.
\end{definition}
%
%In this article we first show the completeness of our logical system with respect to the class of all informational models. Then, we strengthen this result buy proving completeness with respect to the class of finite informational models.

\begin{definition}\label{=set}
For any vector $\ell_1=\<f^1_a\>_{a\in\mathcal{A}}\in\mathcal{L}$, any vector $\ell_2=\<f^2_a\>_{a\in\mathcal{A}}\in \mathcal{L}$, and any set $A\subseteq \mathcal{A}$, let $\ell_1=_A\ell_2$ if $f^1_a= f^2_a$ for each attribute $a\in A$.
\end{definition}

\begin{definition}\label{||set}
For each finite set $A\subseteq \mathcal{A}$, let $\|A\|=\sum_{a\in A}\|a\|$.
\end{definition}

The next definition is the key definition of this section. It specifies formal semantics of our logical system. Item 1. of this definition provides the exact meaning of the budget-constrained functional dependency.

\begin{definition}\label{info sat}
For each informational model $I=\<\mathcal{A},\{D_a\}_{a\in \mathcal{A}},\|\cdot\|,\mathcal{L}\>$ and each formula  $\phi\in\Phi(\mathcal{A})$, the satisfiability relation $I\vDash\phi$ is defined as follows:
\begin{enumerate}
\item $I\vDash A\rhd_p B$ when there is a finite set $C\subseteq \mathcal{A}$ such that $\|C\|\le p$ and for each vectors $\ell_1,\ell_2\in\mathcal{L}$, if $\ell_1=_{A,C}\ell_2$, then $\ell_1=_B \ell_2$,
\item $I\vDash \neg\psi$ if $I\nvDash \psi$,
\item $I\vDash \psi\to\chi$ if  $I\nvDash \psi$ or $I\vDash \psi$.
\end{enumerate}
\end{definition}

\section{Axioms}\label{axioms section}

For any set of attributes $\mathcal{A}$, our logical system, in addition to propositional tautologies in language $\Phi(\mathcal{A})$ and the Modus Ponens inference rule, contains the following axioms:

\begin{enumerate}
\item Reflexivity: $A\rhd_p B$, where $B\subseteq A$,
\item Augmentation: $A\rhd_p B\to A,C\rhd_p B,C$,
\item Transitivity: $A\rhd_p B\to(B\rhd_q C\to A\rhd_{p+q} C)$.
\end{enumerate}
By $A,B$ we denote the union of sets $A$ and $B$.
We write $\vdash\phi$ if formula $\phi$ is derivable in our system. Also, we write $X \vdash\phi$ if formula $\phi$ is derivable in our system extended by the set of additional axioms $X$.

\section{Examples of Proofs}\label{examples section}

We prove the soundness of our logical system in the next section. Here we give several examples of formal proofs in this system. We start by showing that the Weakening and the Monotonicity axioms from \cite{v14arxiv} are derivable in our system.

\begin{proposition}[Weakening]
$\vdash A\rhd_p C,D\to A,B\rhd_p C$.
\end{proposition}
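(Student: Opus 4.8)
The plan is to derive Weakening from the three axioms using a short chain of Reflexivity, Augmentation, and Transitivity. The key observation is that $A,B \rhd_p C,D$ follows from $A \rhd_p C,D$ by Augmentation (with the augmenting set being $B$), and that $C,D \rhd_p C$ is an instance of Reflexivity (in fact with budget $0$, but any $p$ works). Chaining these with Transitivity should give $A,B \rhd_{p} C$, since the budgets add as $p + 0 = p$. So the proof is essentially a three-line formal derivation inside an itemized or enumerated list.

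More concretely, first I would write down the hypothesis $A\rhd_p C,D$ and apply the Augmentation axiom $A\rhd_p C,D \to A,B\rhd_p C,D$ together with Modus Ponens to obtain $A,B\rhd_p C,D$. Second, I would invoke Reflexivity to get $C,D\rhd_0 C$ (using $C\subseteq C,D$), and then note that $0 \le p$; actually, to keep the budgets matching cleanly for Transitivity, it is simplest to use Transitivity in the form $A,B\rhd_p C,D \to (C,D\rhd_0 C \to A,B\rhd_{p+0} C)$ and observe $p+0=p$. Third, two applications of Modus Ponens then yield $A,B\rhd_p C$, and discharging the assumption via the deduction theorem (or simply presenting the whole thing as a propositional derivation) gives $\vdash A\rhd_p C,D\to A,B\rhd_p C$.

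There is essentially no obstacle here — this is a warm-up example meant to illustrate the formal system, and the only minor point to be careful about is the bookkeeping of budgets: one must make sure the Reflexivity instance is stated with a budget that makes the Transitivity application land on exactly $p$ rather than some $p+q$ with $q>0$. Using budget $0$ in Reflexivity handles this, since $\|C\|$ need not be $0$ but Reflexivity as stated allows any subscript $p$, and in particular $0$. I would present the argument as a numbered list of derivation steps, each annotated with the axiom or rule used, and conclude by appealing to the deduction lemma for propositional logic to fold the hypothesis into the implication.

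\begin{proof}
We reason as follows. First, note that
$$A\rhd_p C,D\to A,B\rhd_p C,D$$
is an instance of the Augmentation axiom. Second,
$$C,D\rhd_0 C$$
is an instance of the Reflexivity axiom, because $C\subseteq C,D$. Third,
$$A,B\rhd_p C,D\to(C,D\rhd_0 C\to A,B\rhd_{p+0} C)$$
is an instance of the Transitivity axiom, and $p+0=p$. Combining these three statements using propositional reasoning and the Modus Ponens inference rule, we conclude that $\vdash A\rhd_p C,D\to A,B\rhd_p C$.
\end{proof}
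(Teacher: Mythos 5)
There is a genuine (though easily repaired) error in your first step. The Augmentation axiom is the schema $X\rhd_p Y\to X,Z\rhd_p Y,Z$: the augmenting set $Z$ is added to \emph{both} sides of the dependency. Consequently, augmenting $A\rhd_p C,D$ with $B$ yields
$$A\rhd_p C,D\to A,B\rhd_p B,C,D,$$
not $A\rhd_p C,D\to A,B\rhd_p C,D$ as you claim. Your formula is not an instance of the axiom: for it to be one you would need a set $Z$ with $A\cup Z=A\cup B$ and $C\cup D\cup Z=C\cup D$, which forces $B\subseteq A\cup C\cup D$ and fails for a general $B$. In fact, the statement $A\rhd_p C,D\to A,B\rhd_p C,D$ is itself (a special case of) the Weakening principle you are trying to prove, so citing it as an axiom instance is essentially circular.

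The repair is exactly the paper's proof: use the correct Augmentation instance $A\rhd_p C,D\to A,B\rhd_p B,C,D$, replace your Reflexivity instance by $B,C,D\rhd_0 C$ (valid since $C\subseteq B\cup C\cup D$), and then apply Transitivity in the form $A,B\rhd_p B,C,D\to(B,C,D\rhd_0 C\to A,B\rhd_{p+0}C)$. Your bookkeeping of the budgets ($p+0=p$) and the use of Reflexivity with subscript $0$ are both correct and match the paper; only the intermediate set in the chain needs to carry the extra $B$.
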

\begin{proof}
By the Augmentation axiom,
\begin{equation}\label{eq1}
\vdash A\rhd_p C,D\to A,B\rhd_p B,C,D.
\end{equation}
By the Reflexivity axiom,
\begin{equation}\label{eq2}
\vdash B,C,D\rhd_0 C.
\end{equation}
By the Transitivity axiom,
\begin{equation}\label{eq3}
A,B\rhd_p B,C,D\to(B,C,D\rhd_0 C\to A,B\rhd_p C).
\end{equation}
Finally, from (\ref{eq1}), (\ref{eq2}), and (\ref{eq3}), by the laws of propositional logic,
$$\vdash A\rhd_p C,D\to A,B\rhd_p C.$$
\end{proof}

\begin{proposition}[Monotonicity]
$\vdash A\rhd_p B\to A\rhd_q B$, where $p\le q$.
\end{proposition}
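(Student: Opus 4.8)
The plan is to derive Monotonicity directly from Reflexivity and Transitivity, mimicking the structure of the Weakening proof. The key observation is that if $p \le q$, then there must be a way to ``spend the extra budget $q-p$ on nothing useful'' — which the Reflexivity axiom lets us do, since $B \rhd_{q-p} B$ holds (as $B \subseteq B$), and in fact Reflexivity gives $B \rhd_r B$ for \emph{any} non-negative cost $r$, in particular for $r = q - p \ge 0$.

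First I would invoke the Reflexivity axiom to obtain $\vdash B \rhd_{q-p} B$, using that $B \subseteq B$ and $q - p \ge 0$. Next I would instantiate the Transitivity axiom with the triple $(A, B, B)$ and budgets $p$ and $q-p$, giving
\begin{equation*}
\vdash A \rhd_p B \to (B \rhd_{q-p} B \to A \rhd_{p + (q-p)} B).
\end{equation*}
Since $p + (q - p) = q$, the consequent is exactly $A \rhd_q B$. Combining these two facts with the propositional tautology $\alpha \to (\beta \to \gamma) \to (\beta \to (\alpha \to \gamma))$ — or more simply, since $B \rhd_{q-p} B$ is already proved, by Modus Ponens we may discharge it — we conclude $\vdash A \rhd_p B \to A \rhd_q B$.

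I expect there to be essentially no obstacle here; the only point requiring a moment's care is the arithmetic identity $p + (q - p) = q$ together with the side condition that the subscript $q - p$ appearing in the Reflexivity instance is a legitimate (non-negative real) budget, which holds precisely because $p \le q$. One should also double-check that the language $\Phi(\mathcal{A})$ and the axioms genuinely allow arbitrary non-negative real subscripts on Reflexivity instances — Definition~\ref{Phi} and the statement of the Reflexivity axiom confirm this, since no upper bound is placed on $p$. So the proof is a short three-line formal derivation of the same flavor as the Weakening proof above.
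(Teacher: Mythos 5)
Your proposal is correct and is essentially identical to the paper's own proof: Reflexivity gives $\vdash B\rhd_{q-p}B$, Transitivity gives $\vdash A\rhd_p B\to(B\rhd_{q-p}B\to A\rhd_q B)$, and propositional logic discharges the middle antecedent. Nothing to add.
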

\begin{proof}
By the Reflexivity axiom,
\begin{equation}\label{eq4}
\vdash B\rhd_{q-p} B.
\end{equation}
By the Transitivity axiom,
\begin{equation}\label{eq5}
\vdash A\rhd_p B\to(B\rhd_{q-p} B\to A\rhd_q B).
\end{equation}
Finally, from (\ref{eq4}) and (\ref{eq5}), by the laws of propositional logic,
$$
\vdash A\rhd_p B\to A\rhd_q B.
$$
\end{proof}

As our last example, we prove a generalized version of the Augmentation axiom.
\begin{proposition}
$\vdash A\rhd_p B\to(C\rhd_q D\to A,C\rhd_{p+q}B,D)$.
\end{proposition}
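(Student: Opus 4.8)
The plan is to chain two applications of the Augmentation axiom with one application of the Transitivity axiom, and then finish by propositional reasoning. Concretely, I would first apply Augmentation to $A\rhd_p B$ by adding the set $C$ to both sides, obtaining $\vdash A\rhd_p B\to A,C\rhd_p B,C$. Next I would apply Augmentation to $C\rhd_q D$ by adding the set $B$, obtaining $\vdash C\rhd_q D\to B,C\rhd_q B,D$. The point of choosing these particular augmenting sets is to make the right-hand side of the first dependency, namely $B,C$, coincide with the left-hand side of the second, so that Transitivity can be applied.

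I would then invoke the Transitivity axiom in the instance
\[
\vdash A,C\rhd_p B,C\to\bigl(B,C\rhd_q B,D\to A,C\rhd_{p+q}B,D\bigr),
\]
noting that the budget subscripts add up exactly as required, $p+q$, and that the union $B,C$ unifies the two dependencies.

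Finally, from the three displayed formulas above, the desired statement $\vdash A\rhd_p B\to(C\rhd_q D\to A,C\rhd_{p+q}B,D)$ follows by the laws of propositional logic, exactly in the same style as the proofs of the Weakening and Monotonicity propositions. There is no real obstacle here: the only thing to be careful about is that the sets $B$ and $C$ are inserted on the correct sides so that the two intermediate dependencies share the common set $B,C$, and that one uses $B,C\rhd_q B,D$ rather than, say, $C\rhd_q D$ augmented incorrectly. This is a routine derivation; I expect the whole argument to be three or four short equations plus a closing appeal to propositional logic.
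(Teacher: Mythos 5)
Your proposal is correct and matches the paper's own proof exactly: the same two Augmentation instances (adding $C$ to $A\rhd_p B$ and $B$ to $C\rhd_q D$), the same Transitivity instance through the common set $B,C$, and the same closing appeal to propositional logic.
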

\begin{proof}
By the Augmentation axiom,
\begin{equation}\label{eq11}
\vdash A\rhd_p B \to A,C\rhd_p B,C
\end{equation}
and
\begin{equation}\label{eq12}
\vdash C\rhd_q D \to B,C\rhd_q B,D.
\end{equation}
At the same time, by the Transitivity axiom,
\begin{equation}\label{eq13}
\vdash A,C\rhd_p B,C\to(B,C\rhd_q B,D\to A,C\rhd_{p+q}B,D).
\end{equation}
Finally, from (\ref{eq11}), (\ref{eq12}), and (\ref{eq13}), by the laws of propositional logic,
$$\vdash A\rhd_p B\to(C\rhd_q D\to A,C\rhd_{p+q}B,D).$$
\end{proof}

\section{Soundness}\label{soundness section}

In this section we prove the soundness of our logical system.

\begin{theorem}\label{soundness}
If $\phi\in\Phi(\mathcal{A})$ and $\vdash\phi$, then $I\vDash\phi$ for each informational model $I=\<\mathcal{A},\{D_a\}_{a\in \mathcal{A}},\|\cdot\|,\mathcal{L}\>$.
\end{theorem}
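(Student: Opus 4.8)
The plan is to argue by induction on the length of a derivation of $\phi$. A derivation uses propositional tautologies, the three axiom schemes, and Modus Ponens, so it suffices to show that every instance of each axiom scheme is satisfied in an arbitrary informational model $I$, that every propositional tautology is satisfied in $I$, and that Modus Ponens preserves satisfiability in $I$. The propositional part is immediate: by clauses 2 and 3 of Definition~\ref{info sat}, the connectives $\neg$ and $\to$ are interpreted over the atomic formulas $A\rhd_p B$ exactly as in classical propositional logic, so $\psi\mapsto(I\vDash\psi)$ is a Boolean valuation; hence every propositional tautology holds in $I$, and from $I\vDash\psi$ and $I\vDash\psi\to\chi$ we get $I\vDash\chi$.

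For the three axiom schemes I would unwind clause 1 of Definition~\ref{info sat} in each case. For Reflexivity with $B\subseteq A$, the empty set $C=\varnothing$ is a witness: $\|C\|=0\le p$, and $\ell_1=_A\ell_2$ immediately gives $\ell_1=_B\ell_2$. For Augmentation, assume $I\vDash A\rhd_p B$ with witness set $D$, where $\|D\|\le p$; the same $D$ witnesses $I\vDash A,C\rhd_p B,C$, since $\ell_1=_{A,C,D}\ell_2$ implies $\ell_1=_{A,D}\ell_2$, hence $\ell_1=_B\ell_2$, and combined with $\ell_1=_C\ell_2$ this gives $\ell_1=_{B,C}\ell_2$. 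For Transitivity, assume $I\vDash A\rhd_p B$ with witness $D$ and $I\vDash B\rhd_q C$ with witness $E$; I would take $D\cup E$ as a witness for $I\vDash A\rhd_{p+q}C$, observing that $\ell_1=_{A,D,E}\ell_2$ forces $\ell_1=_B\ell_2$ through $D$ and then $\ell_1=_C\ell_2$ through $E$.

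The only step that genuinely uses the cost subscripts, and hence the only place where the argument goes beyond the original Armstrong soundness proof, is checking that the combined witness stays within budget. By Definition~\ref{||set} and the non-negativity of attribute costs, $\|D\cup E\|=\sum_{a\in D\cup E}\|a\|\le\sum_{a\in D}\|a\|+\sum_{a\in E}\|a\|=\|D\|+\|E\|\le p+q$, which is exactly what the subscript $p+q$ in the Transitivity axiom requires. I do not anticipate any real obstacle; this mild subadditivity observation for the union of witness sets is the single subtlety worth isolating, and everything else is routine unwinding of the definitions.
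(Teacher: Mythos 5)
Your proposal is correct and follows essentially the same route as the paper: the propositional part is dispatched as a standard Boolean-valuation argument, Reflexivity is witnessed by $C=\varnothing$, Augmentation reuses the same witness set, and Transitivity takes the union of the two witnesses with the subadditivity bound $\|D\cup E\|\le\|D\|+\|E\|\le p+q$. The paper organizes the three cases as separate lemmas but the content is identical.
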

The soundness of propositional tautologies and the Modus Ponens inference rule follows from Definition~\ref{info sat} in the standard way. Below we prove the soundness of the remaining axioms as separate lemmas.

\begin{lemma}
For each finite sets $A,B\subseteq \mathcal{A}$, if $B\subseteq A$, then $I\vDash A\rhd_p B$.
\end{lemma}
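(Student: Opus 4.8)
The plan is to verify the soundness of the Reflexivity axiom directly from Definition~\ref{info sat}, item 1. Unwinding that definition, $I\vDash A\rhd_p B$ holds precisely when there is a finite set $C\subseteq\mathcal{A}$ with $\|C\|\le p$ such that any two legitimate vectors agreeing on $A\cup C$ also agree on $B$. Since $p$ is a non-negative real (and the claim should hold for every such $p$), the natural witness is the empty set: $\|\varnothing\|=\sum_{a\in\varnothing}\|a\|=0\le p$ by Definition~\ref{||set}.

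With $C=\varnothing$ chosen, it remains to check that for all $\ell_1,\ell_2\in\mathcal{L}$, if $\ell_1=_{A,\varnothing}\ell_2$ then $\ell_1=_B\ell_2$. Note $A\cup\varnothing=A$, so the hypothesis is just $\ell_1=_A\ell_2$, i.e.\ $f^1_a=f^2_a$ for every $a\in A$ (Definition~\ref{=set}). Since $B\subseteq A$ by assumption, every $a\in B$ is also in $A$, hence $f^1_a=f^2_a$ for every $a\in B$, which is exactly $\ell_1=_B\ell_2$. This establishes the required implication, so $I\vDash A\rhd_p B$.

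There is essentially no obstacle here: the only thing to be slightly careful about is making the witness set $C$ explicit and invoking Definition~\ref{||set} to see $\|\varnothing\|=0$, and then chaining the set inclusions through Definitions~\ref{=set} and~\ref{info sat}. I would present this in two or three sentences, taking $C=\varnothing$ and observing that the semantic condition collapses to the trivial observation that agreement on a superset implies agreement on a subset.
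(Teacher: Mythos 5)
Your proof is correct and follows exactly the paper's argument: take the witness set $C=\varnothing$, note $\|\varnothing\|=0\le p$, and observe that agreement on $A$ forces agreement on the subset $B$ via Definition~\ref{=set}. Nothing is missing.
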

\begin{proof}
Let $C=\varnothing$. Thus, $\|C\|=\|\varnothing\|=0\le p$. Consider any two vectors $\ell_1,\ell_2\in\mathcal{L}$ such that $\ell_1=_{A,C}\ell_2$. It suffices to show that $\ell_1=_B\ell_2$, which is true due to Definition~\ref{=set} and the assumption $B\subseteq A$. 
\end{proof}

\begin{lemma}
For each finite sets $A,B,C\subseteq \mathcal{A}$, if $I\vDash A\rhd_p B$, then $I\vDash A,C\rhd_p B,C$.
\end{lemma}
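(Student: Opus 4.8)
The plan is to unfold Definition~\ref{info sat} on both sides and reuse the same witness set, being careful to keep the witness set of the hypothesis notationally distinct from the set $C$ appearing in the Augmentation axiom. So first I would rename: since $I\vDash A\rhd_p B$, item~1 of Definition~\ref{info sat} gives a finite set $D\subseteq\mathcal{A}$ with $\|D\|\le p$ such that for all $\ell_1,\ell_2\in\mathcal{L}$, if $\ell_1=_{A,D}\ell_2$ then $\ell_1=_B\ell_2$.

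Next I would verify that this very same $D$ witnesses $I\vDash A,C\rhd_p B,C$. The cost side is immediate: $\|D\|\le p$ still holds. For the dependency side, take arbitrary $\ell_1,\ell_2\in\mathcal{L}$ with $\ell_1=_{A,C,D}\ell_2$. From Definition~\ref{=set}, agreement on a union of sets is equivalent to agreement on each of them, so $\ell_1=_{A,C,D}\ell_2$ yields both $\ell_1=_{A,D}\ell_2$ and $\ell_1=_C\ell_2$. Applying the property of $D$ to $\ell_1=_{A,D}\ell_2$ gives $\ell_1=_B\ell_2$, and combining $\ell_1=_B\ell_2$ with $\ell_1=_C\ell_2$ gives $\ell_1=_{B,C}\ell_2$, again by Definition~\ref{=set}. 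This is exactly what item~1 of Definition~\ref{info sat} requires for $I\vDash A,C\rhd_p B,C$.

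There is no real obstacle here; the only thing to watch is the clash of names between the quantified set $C$ in clause~1 of Definition~\ref{info sat} and the set $C$ in the statement of Augmentation, which is why I introduce the fresh symbol $D$ for the witness. Everything else is a direct application of the fact that $=_{X\cup Y}$ decomposes as $=_X$ together with $=_Y$.
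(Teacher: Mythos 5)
Your proof is correct and follows essentially the same route as the paper's: rename the witness set to $D$, reuse it unchanged as the witness for $A,C\rhd_p B,C$, and decompose $=_{A,C,D}$ into $=_{A,D}$ and $=_C$ before recombining $=_B$ with $=_C$. No gaps; the remark about avoiding the name clash with the quantified set in Definition~\ref{info sat} is a sensible precaution the paper also takes implicitly.
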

\begin{proof}
By Definition~\ref{info sat}, assumption $I\vDash A\rhd_p B$ implies that there is a set $D\subseteq\mathcal{A}$ such that (i) $\|D\|\le p$ and (ii) for each $\ell_1,\ell_2\in\mathcal{L}$, if $\ell_1=_{A,D}\ell_2$, then $\ell_1=_B\ell_2$. 

Consider now $\ell_1,\ell_2\in\mathcal{L}$ such that $\ell_1=_{A,C,D}\ell_2$. It suffices to show that $\ell_1=_{B,C}\ell_2$. Note that assumption $\ell_1=_{A,C,D}\ell_2$ implies that $\ell_1=_{A,D}\ell_2$ and $\ell_1=_{C}\ell_2$ by Definition~\ref{=set}\idot\ Due to condition (ii) above, the former implies that $\ell_1=_{B}\ell_2$. Finally, statements $\ell_1=_{B}\ell_2$ and $\ell_1=_{C}\ell_2$ together imply that $\ell_1=_{B,C}\ell_2$.
\end{proof}

\begin{lemma}
For each finite sets $A,B,C\subseteq \mathcal{A}$, if $I\vDash A\rhd_p B$ and $I\vDash B\rhd_q C$, then $I\vDash A\rhd_{p+q} C$.
\end{lemma}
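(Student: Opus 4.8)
The plan is to unfold Definition~\ref{info sat} on the two hypotheses, extract the witness sets they provide, take their union as the witness for the conclusion, and then check the two conditions that Definition~\ref{info sat} requires: the cost bound and the implication between the $=_X$ relations.

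First I would apply the hypothesis $I\vDash A\rhd_p B$ to get a finite set $D\subseteq\mathcal{A}$ with $\|D\|\le p$ such that, for all $\ell_1,\ell_2\in\mathcal{L}$, if $\ell_1=_{A,D}\ell_2$ then $\ell_1=_B\ell_2$. Symmetrically, from $I\vDash B\rhd_q C$ I would get a finite set $E\subseteq\mathcal{A}$ with $\|E\|\le q$ such that, for all $\ell_1,\ell_2\in\mathcal{L}$, if $\ell_1=_{B,E}\ell_2$ then $\ell_1=_C\ell_2$.

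Next I would take $F=D\cup E$ as the candidate witness for $I\vDash A\rhd_{p+q}C$. Since $D$ and $E$ are finite, so is $F$, and since $\|\cdot\|$ is non-negative, Definition~\ref{||set} gives $\|F\|=\|D\cup E\|\le\|D\|+\|E\|\le p+q$. For the remaining condition, suppose $\ell_1,\ell_2\in\mathcal{L}$ with $\ell_1=_{A,F}\ell_2$. By Definition~\ref{=set} this yields both $\ell_1=_{A,D}\ell_2$ and $\ell_1=_E\ell_2$. The former, with the property of $D$, gives $\ell_1=_B\ell_2$, so together with $\ell_1=_E\ell_2$ we get $\ell_1=_{B,E}\ell_2$; the property of $E$ then gives $\ell_1=_C\ell_2$. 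Hence $I\vDash A\rhd_{p+q}C$.

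The argument is essentially routine bookkeeping; the one place that warrants a sentence of justification is the subadditivity $\|D\cup E\|\le\|D\|+\|E\|$, which is needed precisely because $D$ and $E$ may overlap so the costs cannot simply be added — but this is immediate from the non-negativity of $\|\cdot\|$. I do not expect any genuine obstacle in this lemma.
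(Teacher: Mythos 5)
Your proof is correct and follows essentially the same route as the paper's: extract witness sets from the two hypotheses, take their union as the witness for the conclusion, bound its cost by subadditivity of $\|\cdot\|$, and chain the two implications through $=_B$. The paper's argument is identical in structure, so no further comment is needed.
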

\begin{proof}
By Definition~\ref{info sat}, assumption $I\vDash A\rhd_p B$ implies that there is $D_1\subseteq \mathcal{A}$ such that (i) $\|D_1\|\le p$ and (ii) for each $\ell_1,\ell_2\in\mathcal{L}$, if $\ell_1=_{A,D_1}\ell_2$, then $\ell_1=_B\ell_2$.

Similarly, assumption $I\vDash B\rhd_q C$ implies that there is $D_2\subseteq \mathcal{A}$ such that (iii) $\|D_2\|\le q$ and (iv) for each $\ell_1,\ell_2\in\mathcal{L}$, if $\ell_1=_{B,D_2}\ell_2$, then $\ell_1=_C\ell_2$.

Let $D=D_1,D_2$. By Definition~\ref{||set},  $\|D\|\le \|D_1\|+\|D_2\|$. Taking into account statements (i) and (iii) above, we conclude that $\|D\|\le p + q$. Consider any two vectors $\ell_1,\ell_2\in\mathcal{L}$ such that $\ell_1=_{A,D}\ell_2$. It suffices to show that $\ell_1=_C\ell_2$. Indeed, by Definition~\ref{=set}, assumption $\ell_1=_{A,D}\ell_2$ implies that $\ell_1=_{A,D_1}\ell_2$. Hence, $\ell_1=_{B}\ell_2$ due to condition (ii). At the same time, assumption $\ell_1=_{A,D}\ell_2$ also implies that $\ell_1=_{D_2}\ell_2$ by Definition~\ref{=set}\idot Thus, $\ell_1=_{B,D_2}\ell_2$ by Definition~\ref{=set}\idot Therefore, $\ell_1=_{C}\ell_2$ due to condition (iv).
\end{proof}

This concludes the proof of Theorem~\ref{soundness}.

\section{Auxiliary Hypergraph Semantics}\label{hypergraph section}

The main goal of the rest of the article is to prove the completeness of our logical system with respect to the informational semantics. To achieve this goal we introduce the {\em hypergraph} semantics of our logical system and prove that the following statements are equivalent for each $\phi\in \Phi(\mathcal{A})$:
\begin{enumerate}
\item $\phi$ is provable in our logical system,
\item $\phi$ is satisfied in any informational model with a set of attributes $\mathcal{A}$,
%\item $\phi$ is satisfied in any finite informational model with a set of attributes $\mathcal{A}$,
\item $\phi$ is satisfied in any hypergraph with a set of vertices $\mathcal{A}$.
\end{enumerate}
We prove the equivalence of these statements by showing that the first statement implies the second, the second implies the third, and the third implies the first. Note that we have already proved in Theorem~\ref{soundness} that the first statement implies the second one. In the rest of the article we prove that the second statement implies the third and that the third implies the first. Together, these results will imply the soundness and the completeness of our logical system with respect to the informational semantics.

\subsection{Hypergraph Terminology}

Before defining the hypergraph semantics of our logical system, we introduce the basic hypergraph terminology used throughout the rest of the article. In mathematics, a hypergraph is a generalization of a graph in which edges have arbitrary numbers of ends, see \cite{b89}. Our hypergraph semantics is based on weighted directed hypergraphs. In such hypergraphs, edges are directed in the sense that they have multiple tails and multiple heads. For any given edge $e$, we denote these sets by $in(e)$ and $out(e)$. The edges are weighted in the sense that there is a non-negative value assigned to each edge. 

\begin{definition}\label{hyper model}
A weighted directed hypergraph, or just a ``hypergraph", is a tuple $\<V,E,in,out,w\>$, where
\begin{enumerate}
\item $V$ is an arbitrary finite set of ``vertices",
\item $E$ is an arbitrary (possibly infinite) set of ``edges", disjoint with set $V$,
\item $in$ is a function that maps each edge $e\in E$ into set $in(e)\subseteq V$,
\item $out$ is a function that maps each edge $e\in E$ into set $out(e)\subseteq V$,
\item $w$ is a function that maps each edge $e\in E$ into a non-negative real number $w(e)$.
\end{enumerate}
\end{definition}

\begin{definition}
For any hypergraph $\<V,E,in,out,w\>$ and any set $F\subseteq E$, let $w(F)=\sum_{e\in F}w(e)$.
\end{definition}

Next, we define the closure $A^*_F$ of a set of vertices $A$ with respect to a set of edges $F$ of a hypergraph. Informally, the closure $A^*_F$ is a set of all vertices that are reachable from a vertex in set $A$ following the directed edges in set $F$. In order to follow a directed edge $e\in F$, one needs to be able to first reach all vertices in set $in(e)$. To define the closure $A^*_F$, we first define the partial closure $A^k_F$ of all vertices reachable from $A$ through $F$ in no more than $k$ steps:

\begin{definition}\label{AkF}
For each weighted directed hypergraph  $\<V,E,in,out,w\>$, each set $A\subseteq V$, each set $F\subseteq E$, and each non-negative integer $k$, let set $A^k_F$ be defined recursively as follows:
\begin{enumerate}
\item $A^0_F=A$,
\item for any $k\ge 0$,
$$A^{k+1}_F=A^k_F \cup \bigcup_{\{f\in F\;|\;in(f)\subseteq A^k_F\}} out(f).$$
\end{enumerate}
\end{definition}

Figure~\ref{AkF figure} depicts a hypergraph where vertices are represented by circles and edges by ovals. We use arrows to indicate tails and heads of an edge. An arrow from a vertex to an edge indicates that the vertex is a tail of the edge, and an arrow from an edge to a vertex indicates that the vertex is a head of the edge. For example, $in(e_1)=\{v_1,v_2\}$ and $out(e_1)=\{v_3,v_4\}$. The same figure shows partial closures  $A^0_F=\{v_1,v_2\}$, $A^1_F=\{v_1,v_2,v_3,v_4\}$, and $A^2_F=\{v_1,v_2,v_3,v_4,v_5,v_6\}$, where $A=\{v_1,v_2\}$ and $F=\{e_1,e_2\}$.

\begin{figure}[ht]
\begin{center}
\vspace{0mm}
\scalebox{.5}{\includegraphics{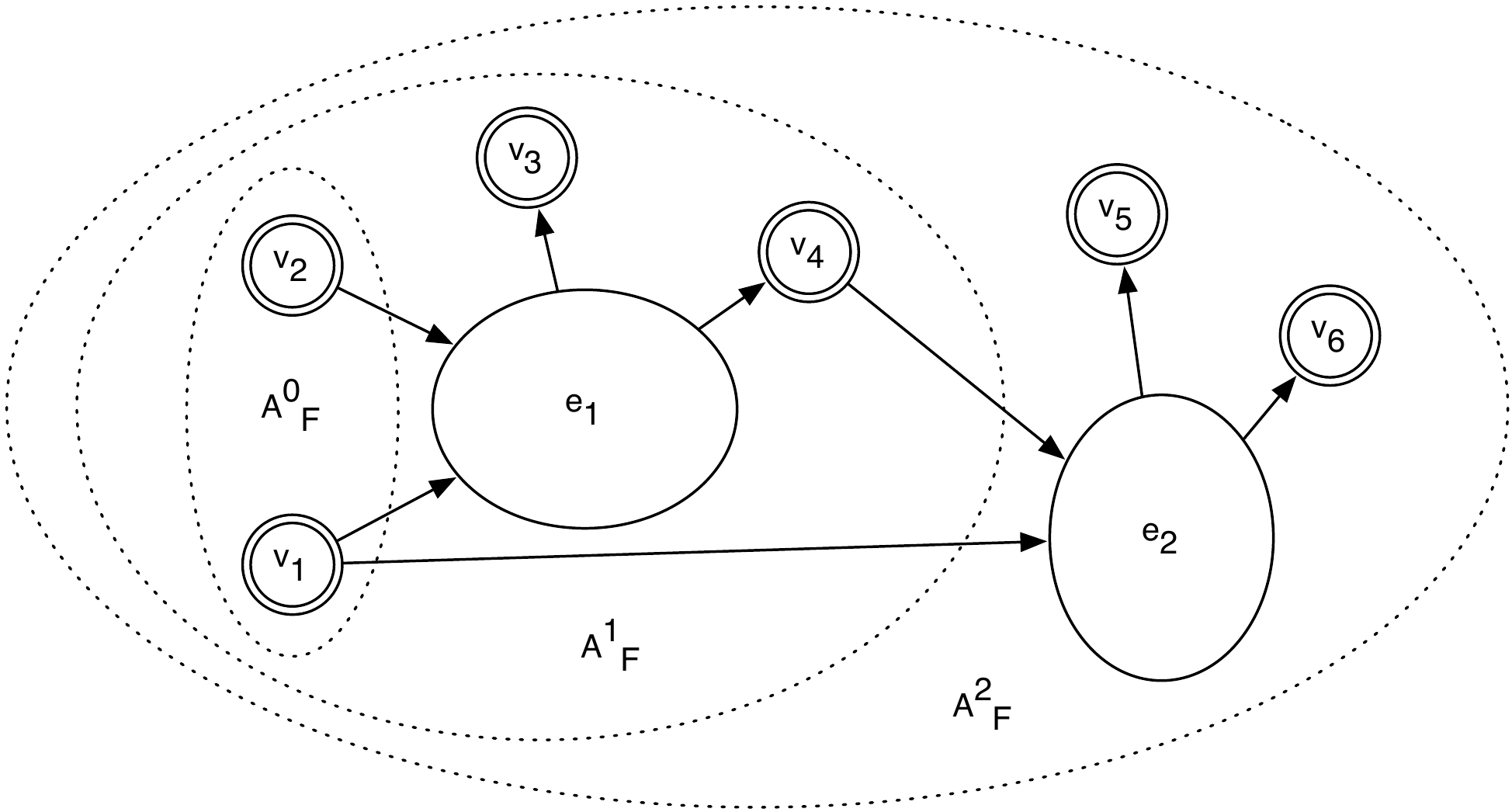}}
\vspace{5mm}
\footnotesize
\caption{$A^k_F$ for $A=\{v_1,v_2\}$ and $F=\{e_1,e_2\}$.}\label{AkF figure}
\vspace{0cm}
\end{center}
\vspace{0mm}
\end{figure}
Finally, we define closure $A^*_F$ to be the union of all partial closures:
\begin{definition}\label{A*F}
$A^*_F=\bigcup_{k\ge 0} A^k_F$.
\end{definition}
Next, we establish two properties of closures that are used later in the proof of the completeness for the hypergraph semantics. 

\begin{lemma}\label{A*F=AkF}
For each set of vertices $A\subseteq V$ and each set of edges $F\subseteq E$ there is $k\ge 0$ such that $A^*_F=A^k_F$. 
\end{lemma}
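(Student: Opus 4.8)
The plan is to exploit the fact that the vertex set $V$ is finite, which forces the increasing chain of partial closures to stabilize after finitely many steps. First I would observe from Definition~\ref{AkF} that the sequence $A^0_F, A^1_F, A^2_F, \dots$ is monotonically nondecreasing: each $A^{k+1}_F$ is defined as $A^k_F$ together with some additional heads of edges, so $A^k_F \subseteq A^{k+1}_F$ for every $k \ge 0$. By induction this gives $A^i_F \subseteq A^j_F$ whenever $i \le j$.

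Next I would use finiteness of $V$. Since every $A^k_F$ is a subset of the finite set $V$, the cardinalities $|A^0_F| \le |A^1_F| \le |A^2_F| \le \cdots$ form a nondecreasing sequence of integers bounded above by $|V|$. Hence this sequence cannot strictly increase forever; there must be some $k \ge 0$ with $|A^{k+1}_F| = |A^k_F|$, and combined with the inclusion $A^k_F \subseteq A^{k+1}_F$ this yields $A^k_F = A^{k+1}_F$.

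Then I would show by induction on $m \ge k$ that $A^m_F = A^k_F$ once $A^{k+1}_F = A^k_F$. The base case $m = k$ is trivial. For the inductive step, $A^{m+1}_F$ is built from $A^m_F$ by the same recursive rule in Definition~\ref{AkF}; since $A^m_F = A^k_F$ by the inductive hypothesis, the rule produces exactly the same set it produced when applied to $A^k_F$, namely $A^{k+1}_F = A^k_F$. So $A^m_F = A^k_F$ for all $m \ge k$, and trivially $A^m_F \subseteq A^k_F$ for $m < k$ as well. Therefore $A^*_F = \bigcup_{m \ge 0} A^m_F = A^k_F$ by Definition~\ref{A*F}.

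I do not expect any serious obstacle here: the only real ingredient is the finiteness of $V$ (Definition~\ref{hyper model}, item 1), and the argument is the standard "increasing chain of subsets of a finite set stabilizes" reasoning. The one point to be careful about is making the stabilization argument airtight — namely that $A^{k+1}_F = A^k_F$ propagates to all later indices — which is why I would include the short induction in the third paragraph rather than asserting it.
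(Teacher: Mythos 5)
Your proof is correct and follows essentially the same route as the paper: both rely on the finiteness of $V$ to conclude that the nondecreasing chain $A^0_F\subseteq A^1_F\subseteq\cdots$ stabilizes, after which $A^*_F=A^k_F$ follows from Definition~\ref{A*F}. Your version merely spells out the cardinality bound and the induction showing that $A^{k+1}_F=A^k_F$ propagates to all later indices, details the paper leaves implicit.
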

\begin{proof}
By Definition~\ref{hyper model}, the set of all vertices $V$ is finite. Thus, by Definition~\ref{AkF}, chain
$
A^0_F\subseteq A^1_F\subseteq A^2_F\subseteq \dots
$
is a non-decreasing chain of subsets of finite set $V$. Hence, there must exist $k\ge 0$ such that all sets in this chain starting with set $A^k_F$ are equal. Therefore, $A^*_F=A^k_F$ by Definition~\ref{A*F}\idot
\end{proof}

\begin{lemma}\label{serial lemma}
For each $k\ge 0$, each set of vertices $A\subseteq V$, and each set of edges $F\subseteq E$, there is a sequence 
$A=A_1,f_1,A_2,f_2,\dots,A_{n-1},f_{n-1},A_n=A^k_F$
such that
\begin{enumerate}
\item $n\ge 1$,
\item $f_1,\dots,f_{n-1}\in F$ are distinct edges,
\item $A_1$, $\dots$, $A_{n-1}$, and $A_n$ are subsets of set $V$,
\item $in(f_i)\subseteq A_i$, for each $1\le i< n$,
\item $A_i\cup out(f_i)=A_{i+1}$, for each $1\le i< n$.
\end{enumerate}
\end{lemma}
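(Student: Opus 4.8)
The plan is to prove the statement by induction on $k$, serializing the construction of the closure $A^k_F$ one edge at a time.

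The base case $k=0$ is immediate: since $A^0_F=A$ by Definition~\ref{AkF}, taking $n=1$ and $A_1=A$ works, as conditions 2, 4, and 5 are vacuous when there is no index $i$ with $1\le i<n$. For the inductive step, suppose we already have a sequence $A=A_1,f_1,\dots,A_{n-1},f_{n-1},A_n=A^k_F$ with the five properties. I would extend it by appending finitely many further edge/set pairs so that the last set becomes $A^{k+1}_F$. The crucial point is that $A^{k+1}_F\setminus A^k_F$ is a subset of the finite vertex set $V$, so for each vertex $v$ in it one may, by the recursive clause of Definition~\ref{AkF}, pick an edge $g_v\in F$ with $in(g_v)\subseteq A^k_F$ and $v\in out(g_v)$; collecting the distinct edges among these $g_v$ gives a finite list $g_1,\dots,g_m$ with $in(g_j)\subseteq A^k_F$ for all $j$ and $\bigcup_{j=1}^m out(g_j)\supseteq A^{k+1}_F\setminus A^k_F$. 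Conversely, $in(g_j)\subseteq A^k_F$ forces $out(g_j)\subseteq A^{k+1}_F$ by Definition~\ref{AkF}, so $A^k_F\cup\bigcup_{j=1}^m out(g_j)=A^{k+1}_F$ exactly. Putting $B_0=A_n$ and $B_j=B_{j-1}\cup out(g_j)$ for $1\le j\le m$, the extended sequence $A_1,f_1,\dots,A_n,g_1,B_1,\dots,g_m,B_m$ ends at $B_m=A^{k+1}_F$; conditions 3, 4, 5 hold because $in(g_j)\subseteq A^k_F=B_0\subseteq B_{j-1}$ and $B_{j-1}\cup out(g_j)=B_j$. If $A^{k+1}_F=A^k_F$ we simply take $m=0$ and leave the sequence unchanged.

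The only delicate point — and the step I expect to be the real obstacle — is condition 2, the distinctness of all the edges. The new edges $g_1,\dots,g_m$ are pairwise distinct by construction, but I must also check that none of them coincides with a previously used $f_i$. This follows from a monotonicity observation: condition 5 of the inductive hypothesis gives $A_{i+1}=A_i\cup out(f_i)$, hence $out(f_i)\subseteq A_{i+1}\subseteq A_n=A^k_F$ for every $i<n$, whereas each $g_j$ has, by choice, a vertex of $out(g_j)$ lying outside $A^k_F$; therefore $g_j\ne f_i$. Relabelling $A_1,\dots,A_n,B_1,\dots,B_m$ and $f_1,\dots,f_{n-1},g_1,\dots,g_m$ consecutively then yields a sequence of the required form, which completes the induction.
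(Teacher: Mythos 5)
Your proof is correct and follows essentially the same route as the paper's: induction on $k$, appending to the sequence the edges $g\in F$ with $in(g)\subseteq A^k_F$ that produce new vertices, and establishing distinctness by noting that $out(f_i)\subseteq A^k_F$ for the old edges while each new edge has a head outside $A^k_F$. The only immaterial difference is that you select a covering subfamily of such edges (one per new vertex) where the paper takes all of them.
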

\begin{proof}
We prove this lemma by induction on $k$. If $k=0$, then $A^k_F=A$ by Definition~\ref{AkF}\idot Therefore, the single-element sequence $A$ is the desired sequence.

For the induction step, assume that there is a sequence 
$$A=A_1,f_1,A_2,f_2,\dots,A_{n-1},f_{n-1},A_n=A^k_F$$ 
that satisfies the conditions 1 through 5 above. Let $g_1,\dots, g_m$ be all such edges $g\in F$ that $in(g)\subseteq A^k_F$ and $out(g)\nsubseteq A^k_F$. By the condition 5 above, the condition $out(g)\nsubseteq A^k_F$ implies that none of $g_1,\dots,g_m$ is equal to any of $f_1,\dots,f_{n-1}$. Note that $A^{k+1}_F=A^{k}_F\cup \bigcup_{i=1}^m out(g_i)$ by Definition~\ref{AkF}\idot Therefore, the two-line sequence
\begin{eqnarray*}
A=A_1,f_1,A_2,f_2,\dots,A_{n-1},f_{n-1},A_n=A^k_F, g_1, A^k_F\cup out(g_1),
\\ g_2, A^k_F\cup out(g_1)\cup out(g_2), g_3, \dots, g_m, A^k_F\cup \bigcup_{i=1}^m out(g_i)=A^{k+1}_F
\end{eqnarray*}
is the required sequence for $k+1$. 
\end{proof}

\subsection{Hypergraph Completeness}

In this section we define the hypergraph semantics of our logical system and prove the completeness of the system with respect to this auxiliary semantics. In other words, using statements defined in the beginning of \Section~\ref{hypergraph section}, we prove that the third statement implies the first one. The hypergraph semantics is specified in the following definition. Item 1 in this definition is the key part because it specifies the meaning of the atomic predicate $A\rhd_p B$. Note that we use symbol $\vDash$ for the satisfiability relation under the informational semantics discussed previously and $\Vdash$ for the satisfiability relation under the hypergraph semantics introduced here.

\begin{definition}\label{hyper sat}
For each hypergraph $H=\<V,E,in,out,w\>$ and each formula  $\phi\in\Phi(V)$, the satisfiability relation $H\Vdash\phi$ is defined as follows:
\begin{enumerate}
\item $H\Vdash A\rhd_p B$ if there is a finite set $F\subseteq E$ such that $w(F)\le p$ and $B\subseteq A^*_{F}$,
\item $H\Vdash \neg\psi$ if  $H\nVdash \psi$,
\item $H\Vdash \psi\to\chi$ if  $H\nVdash \psi$ or $H\Vdash \psi$.
\end{enumerate}
\end{definition}

The next theorem is the completeness theorem for the hypergraph semantics of our logical system. The proof of this theorem ends at the end of this section.
\begin{theorem}\label{hyper completeness}
Let $V$ be a set and $\phi\in\Phi(V)$. If $H\Vdash\phi$ for each hypergraph $H$ with set $V$ as vertices, then $\vdash\phi$.
\end{theorem}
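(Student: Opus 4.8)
The plan is to prove the contrapositive: assuming $\not\vdash\phi$, I will construct a hypergraph $H$ with vertex set $V$ such that $H\nVdash\phi$. (Since $\phi$ mentions only finitely many attributes, there is no loss in taking $V$ finite, as Definition~\ref{hyper model} requires of a hypergraph.) First, using the standard Lindenbaum construction, extend $\{\neg\phi\}$ to a maximal consistent set $X\subseteq\Phi(V)$; recall that such an $X$ is closed under Modus Ponens, contains every theorem of the system, and contains exactly one of $\psi,\neg\psi$ for each $\psi$. The task then reduces to a truth lemma: $H\Vdash\psi$ if and only if $\psi\in X$, for every $\psi\in\Phi(V)$. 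Applying this to $\neg\phi\in X$ gives $H\nVdash\phi$, which is what we need.

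The canonical hypergraph $H=\<V,E,in,out,w\>$ is defined by placing into $E$ one edge $e$ for every formula of the form $C\rhd_q D$ belonging to $X$, with $in(e)=C$, $out(e)=D$, and $w(e)=q$ (note $E$ may be infinite, which Definition~\ref{hyper model} permits). The Boolean cases of the truth lemma are routine given that $X$ is maximal consistent, so all the content lies in the atomic case, $H\Vdash A\rhd_p B\iff (A\rhd_p B)\in X$. The direction ($\Leftarrow$) is immediate: if $(A\rhd_p B)\in X$, then the single edge $e$ it contributes satisfies $in(e)=A\subseteq A^0_{\{e\}}$, hence $B=out(e)\subseteq A^1_{\{e\}}\subseteq A^*_{\{e\}}$, and $w(\{e\})=p\le p$, so $H\Vdash A\rhd_p B$ by Definition~\ref{hyper sat}.

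The direction ($\Rightarrow$) is the heart of the argument, and the step I expect to be the main obstacle. Suppose $w(F)\le p$ and $B\subseteq A^*_F$ for some finite $F\subseteq E$. By Lemma~\ref{A*F=AkF}, $A^*_F=A^k_F$ for some $k$, and by Lemma~\ref{serial lemma} there is a sequence $A=A_1,f_1,A_2,\dots,f_{n-1},A_n=A^k_F$ in which the edges $f_1,\dots,f_{n-1}\in F$ are \emph{distinct}, $in(f_i)\subseteq A_i$, and $A_{i+1}=A_i\cup out(f_i)$. Write $f_i$ as the edge contributed by the formula $(C_i\rhd_{q_i}D_i)\in X$, so $C_i=in(f_i)$, $D_i=out(f_i)$, $q_i=w(f_i)$. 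I claim that $(A_i\rhd_{q_i+\dots+q_{n-1}}A_n)\in X$ for $i=n,n-1,\dots,1$ (the empty sum, at $i=n$, being $0$), proved by downward induction on $i$. The base case $i=n$ is the Reflexivity axiom ($A_n\rhd_0 A_n$). For the inductive step: since $C_i\subseteq A_i$ and $A_{i+1}=A_i\cup D_i$, the Augmentation axiom $C_i\rhd_{q_i}D_i\to C_i,A_i\rhd_{q_i}D_i,A_i$ together with Modus Ponens gives $(A_i\rhd_{q_i}A_{i+1})\in X$; applying the Transitivity axiom to this and the induction hypothesis for $i+1$ yields $(A_i\rhd_{q_i+\dots+q_{n-1}}A_n)\in X$. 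At $i=1$ this reads $(A\rhd_s A^k_F)\in X$, where $s=q_1+\dots+q_{n-1}$. Because the $f_i$ are distinct edges of $F$ with non-negative weights, $s=\sum_{i=1}^{n-1}w(f_i)\le w(F)\le p$; and $B\subseteq A^k_F$, so the Reflexivity axiom gives $(A^k_F\rhd_{p-s}B)\in X$. A final application of the Transitivity axiom then yields $(A\rhd_p B)\in X$, completing the atomic case and hence the proof.

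The only points that require care beyond the standard maximal-consistent-set bookkeeping are: (i) extracting the edges $f_i$ from the closure so that they are \emph{distinct}, which is exactly what Lemma~\ref{serial lemma} supplies and which is needed so their weights may be summed against $w(F)$ without overcounting; and (ii) the repeated interleaving of Augmentation (to grow each edge $C_i\rhd_{q_i}D_i$ into a step $A_i\rhd_{q_i}A_{i+1}$ within the accumulating context) with Transitivity (to add up the budgets $q_i$, with Reflexivity absorbing the remaining slack $p-s\ge 0$).
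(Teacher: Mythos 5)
Your proposal is correct and follows essentially the same route as the paper: a Lindenbaum extension of $\{\neg\phi\}$, a canonical hypergraph with one edge per $\rhd$-formula in $X$, and a truth lemma whose atomic case is settled by decomposing the closure $A^*_F$ via Lemma~\ref{A*F=AkF} and Lemma~\ref{serial lemma} and then chaining Augmentation, Transitivity, and a Reflexivity step that absorbs the leftover budget $p-s$. The only cosmetic difference is that you run the induction along the serial sequence downward from $A_n$ while the paper runs it upward from $A_1$.
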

\begin{proof}
Suppose that $\nvdash\phi$. Let $X$ be a maximal consistent subset of $\Phi(V)$ containing formula $\neg\phi$. We define a hypergraph $H=\<V,E,in,out,w\>$ as follows. Let $E$ be set $\{\<A,p,B\>\;|\; A\rhd_p B\in X\}$. For each edge $\<A,p,B\>\in E$, let $in(\<A,p,B\>)=A$, $out(\<A,p,B\>)=B$, and $w(\<A,p,B\>)=p$. In the corollary below and the three lemmas that follow, we establish basic properties of the hypergraph $H$ needed to finish the proof of the completeness.

\begin{corollary}\label{in w out corollary}
$in(e)\rhd_{w(e)}out(e)\in X$ for each $e\in E$.
\end{corollary}

\begin{lemma}\label{*lemma}
$X\vdash A\rhd_p A^*_F$ for each finite $F\subseteq E$ such that $w(F)\le p$ and for each set $A\subseteq V$.
\end{lemma}
\begin{proof}
By  Lemma~\ref{A*F=AkF}, there must exist $k\ge 0$ such that $A^*_F=A^k_F$. Thus, by Lemma~\ref{serial lemma}, there is a sequence 
\begin{equation}\label{star sequence}
A=A_1,f_1,A_2,f_2,\dots,A_{n-1},f_{n-1},A_n=A^*_F
\end{equation}
such that
\begin{enumerate}
\item $n\ge 1$,
\item $f_1,\dots,f_{n-1}\in F$ are distinct edges,
\item $A_1$, \dots, $A_{n-1}$, and $A_n$ are subsets of set $V$,
\item $in(f_i)\subseteq A_i$, for each $1\le i< n$,
\item $A_i\cup out(f_i)=A_{i+1}$, for each $1\le i< n$.
\end{enumerate}
Towards the proof of the lemma, we first show that
\begin{equation}\label{induction claim}
X\vdash A\rhd_{\sum_{i=1}^{m-1} w(f_i)}A_m
\end{equation}
for each $1\le m\le n$. We prove this by induction on $m$. If $m=1$, then $A_m=A$ due to the choice of sequence~(\ref{star sequence}). Thus, $X \vdash A\rhd_0 A_1$ by the Reflexivity axiom. 

Assume that $X\vdash A\rhd_{\sum_{i=1}^{m-1} w(f_i)}A_m$. We need to show that $X\vdash A\rhd_{\sum_{i=1}^{m} w(f_i)}A_{m+1}$. Indeed, since $f_{m}\in F\subseteq E$, then by Corollary~\ref{in w out corollary}, we have $$in(f_{m})\rhd_{w(f_{m})}out(f_{m})\in X.$$ Thus, 
$$
X\vdash A_{m}, in(f_{m})\rhd_{w(f_{m})}A_m, out(f_{m})
$$
by the Augmentation axiom. Note that $in(f_{m})\subseteq A_m$ due to the condition 4 above. Hence,
$$
X\vdash A_{m}\rhd_{w(f_{m})}A_m, out(f_{m}).
$$
Also note that $A_m\cup out(f_{m})=A_{m+1}$ by condition 5 above. Thus,
$$
X\vdash A_{m}\rhd_{w(f_{m})}A_{m+1}.
$$
Therefore, by the induction hypothesis and the Transitivity axiom,
$$X\vdash A\rhd_{\sum_{i=1}^{m} w(f_i)}A_{m+1}.$$

This completes the proof of statement~(\ref{induction claim}). To finish the proof of the lemma, note that the assumption $w(F)\le p$ implies  $\sum_{i=1}^{n-1}w(f_i)\le p$. Thus, $\vdash A\rhd_{p-\sum_{i=1}^{n-1}w(f_i)}A$ by the Reflexivity axiom.  At the same time, statement~(\ref{induction claim}) for $m=n$ asserts that $X\vdash A\rhd_{\sum_{i=1}^{n-1} w(f_i)}A_n$. Hence, by the Transitivity axiom,
$$X\vdash A\rhd_{p-\sum_{i=1}^{n-1}w(f_i)+\sum_{i=1}^{n-1} w(f_i)}A_n.$$
In other words, $X\vdash A\rhd_{p}A_n$. Therefore, $X\vdash A\rhd_p A^*_F$ due to the choice of sequence~(\ref{star sequence}).
\end{proof}

\begin{lemma}\label{hyper iff lemma}
$A\rhd_p B\in X$ if and only if $H\Vdash A\rhd_p B$, for each sets $A, B\subseteq V$ and each non-negative real number $p$.
\end{lemma}
\begin{proof}
\noindent $(\Rightarrow)$. Suppose that $A\rhd_p B\in X$. Then, $\<A,p,B\>\in E$ by the choice of set $E$. Let $F$ be the singleton set $\{\<A,p,B\>\}$. Note that $w(F)=p$, $in(\<A,p,B\>)=A=A^0_F$, and, by Definition~\ref{AkF} and Definition~\ref{A*F},
$$
B=out(\<A,p,B\>)\subseteq \bigcup_{\{f\in F\;|\;in(f)\subseteq A^0_F\}} out(f)\subseteq A^1_F\subseteq A^*_F.
$$
Hence, $B\subseteq A^*_F$. Therefore, $H\Vdash A\rhd_p B$ by Definition~\ref{hyper sat}\idot

\noindent $(\Leftarrow)$. Suppose that $H\Vdash A\rhd_p B$. Then, by Definition~\ref{hyper sat}, there is a finite $F\subseteq E$ such that $w(F)\le p$ and $B\subseteq A^*_F$. Hence, $\vdash A^*_F\rhd_0 B$ by the Reflexivity axiom. Additionally, $X\vdash A\rhd_p A^*_F$ by Lemma~\ref{*lemma}. Thus, $X\vdash A\rhd_p B$ by the Transitivity axiom. Therefore, $A\rhd_p B\in X$ due to the maximality of the set $X$.
\end{proof}

\begin{lemma}\label{hyper induction lemma} $\psi\in X$ if and only if $H\Vdash\psi$ for each $\psi\in\Phi(V)$.
\end{lemma}
\begin{proof}
We prove the lemma by induction on the structural complexity of formula $\psi$. The case when $\psi$ is of form $A\rhd_p B$ follows from Lemma~\ref{hyper iff lemma}. The other cases follow from the maximality and consistency of set $X$ and Definition~\ref{hyper sat} in the standard way.
\end{proof}

To conclude the proof of the theorem, note that assumption $\neg\phi\in X$ implies that $H\nVdash\phi$ by Lemma~\ref{hyper induction lemma} and Definition~\ref{hyper sat}\idot
\end{proof}

%====================================================================================

\section{Completeness theorem for the informational semantics}\label{info semantics completeness}

%====================================================================================

In this section, we prove the completeness theorem for the informational semantics stated in the end of this section as Theorem~\ref{completeness}. This result could be rephrased in terms of statements discussed in the beginning of \Section~\ref{hypergraph section} as: the second statement implies the first one. In the previous section, we have already shown that the third statement implies the first one. Thus, it suffices to prove that the second statement implies the third. In other words, we show that if formula $\phi$ is {\em not} satisfied by a hypergraph $H$, then it is {\em not} satisfied by an informational model $I_H$ constructed from $H$. To prove this, we first describe how to construct an informational model $I_H=\<\mathcal{A},\{D_a\}_{a\in \mathcal{A}},\|\cdot\|,\mathcal{L}\>$ based on a given hypergraph $H=\<V,E,in,out,w\>$. 

Let $\mathcal{A}=V\cup E$. Recall that sets $V$ and $E$ are disjoint due to Definition~\ref{hyper model}\idot
\begin{definition}\label{price info model}
For any attribute $a\in\mathcal{A}=V\cup E$, the cost $\|a\|$ is defined as follows:
$$
\|a\|=
\begin{cases}
w(a), & \mbox{ if $a\in E$},\\
+\infty, & \mbox{ if $a\in V$}.
\end{cases}
$$
\end{definition}

We continue the construction with an auxiliary definition of a path on weighted hypergraph model $H$. It is convenient to distinguish two types of paths: paths that are initiated at a vertex and paths that are initiated at an edge. Note that paths of both types terminate at a vertex.

\begin{definition}
A path initiated at vertex $v_0$ is a finite alternating sequence of vertices and edges $\<v_0,e_1,v_1,e_2,\dots, e_n,v_n\>$ such that
\begin{enumerate}
\item $n\ge 0$,
\item $v_{k-1}\in in(e_k)$ for each $1\le k\le n$,
\item $v_k\in out(e_k)$ for each $1\le k\le n$.
\end{enumerate}
\end{definition}
For example, sequence $\<v_1,e_1,v_4,e_2,v_6\>$ is a path initiated at vertex $v_1$ in the hypergraph depicted in Figure~\ref{AkF figure}.

\begin{definition}
A path initiated at edge $e_1$ is a finite alternating sequence of vertices and edges $\<e_1,v_1,e_2,\dots, e_n,v_n\>$ such that
\begin{enumerate}
\item $n\ge 1$,
\item $v_{k-1}\in in(e_k)$ for each $1< k\le n$,
\item $v_k\in out(e_k)$ for each $1\le k\le n$.
\end{enumerate}
\end{definition}
Sequence $\<e_1,v_4,e_2,v_6\>$ is a path initiated at edge $e_1$ in the hypergraph depicted in Figure~\ref{AkF figure}. 

To understand the rest of the construction of informational model $I_H$, let us consider an analogy between this construction and the informal document/folder model discussed in the introduction. The vertices of the hypergraph could be viewed as folders with multiple documents and directed edges show the process of the dissemination and the encryption of these documents between the folders. In our informational model, for the sake of simplicity, each document consists of just a single bit. For the hypergraph depicted in Figure~\ref{AkF figure}, there might be a document (bit) $x_{v_6}$ initially stored in folder $v_6$. The value of this bit will be disseminated along various paths in the hypergraph and encrypted version of the document will be stored in different vertices (folders) along these paths.
More specifically, the value of each bit stored in a vertex is disseminated {\em against} the direction of each edge leading to this vertex. In our example, because vertex $v_6$ is a head of edge $e_2$ whose tails are $v_1$ and $v_4$,  the encrypted value of $x_{v_6}$ is disseminated between these two tails.
Namely, bit $x_{v_6}$ is represented as a sum of three bits $k_{e_2,v_6}$, $x_{v_1,e_2,v_6}$ and $x_{v_4,e_2,v_6}$ modulo two:
$$
x_{v_6} = k_{e_2,v_6} + x_{v_1,e_2,v_6} + x_{v_4,e_2,v_6} \pmod 2.
$$
One can think of bit $k_{e_2,v_6}$ as an encryption key stored in edge $e_2$ and bits $x_{v_1,e_2,v_6}$ and $x_{v_4,e_2,v_6}$ as encrypted documents distributed between vertices $v_1$ and $v_4$. Since vertex $v_1$ is not a head of any edge in the hypergraph depicted in Figure~\ref{AkF figure}, the value of bit $x_{v_1,e_2,v_6}$ is only stored in vertex $v_1$ and not disseminated any further. At the same time, because vertex $v_4$ is a head of edge $e_1$, the value of bit $x_{v_4,e_2,v_6}$ is further distributed between tails of edge $e_1$. It is represented as a sum of three bits $k_{e_1,v_4,e_2,v_6}$, $x_{v_1,e_1,v_4,e_2,v_6}$ and $x_{v_2,e_1,v_4,e_2,v_6}$ modulo two:
$$
x_{v_4,e_2,v_6} = k_{e_1,v_4,e_2,v_6} + x_{v_1,e_1,v_4,e_2,v_6} + x_{v_2,e_1,v_4,e_2,v_6} \pmod 2.
$$ 
Again, one can think of bit $k_{e_1,v_4,e_2,v_6}$ as an encryption key stored at edge $e_1$ and bits $x_{v_1,e_1,v_4,e_2,v_6}$ and $x_{v_2,e_1,v_4,e_2,v_6}$ as encrypted documents distributed between vertices $v_1$ and $v_2$. 

To summarize, informally, each edge in the hypergraph stores one encryption key corresponding to each path initiated at this edge. Vertices store encrypted documents as they are being disseminated along the paths.
 This intuition is captured  
in the two definitions below.

\begin{definition}\label{Da def}
For any attribute $a\in \mathcal{A}=V\cup E$, let domain $D_a$ be defined as follows:
\begin{enumerate}
\item If $a\in V$, then $D_a$ is the set of all functions that map paths initiated at vertex $a$ into set $\{0,1\}$.
\item If $a\in E$, then $D_a$ is the set of all functions that map paths initiated at edge $a$ into set $\{0,1\}$.
\end{enumerate}
\end{definition}

\begin{definition}\label{canonical L}
Let $\mathcal{L}$ be the set of all vectors $\<f_a\>_{a\in\mathcal{A}}\in \prod_{a\in\mathcal{A}}D_a$ such that for each edge-initiated path $\<e_1,v_1,e_2,\dots, e_n,v_n\>$, the following equation is satisfied:
\begin{eqnarray}
f_{e_1}(\<e_1,v_1,e_2,\dots, e_n,v_n\>)+
\sum_{u\in in(e_1)}f_u(\<u,e_1,v_1,e_2,\dots, e_n,v_n\>)
= \nonumber\\
f_{v_1}(\<v_1,e_2,\dots, e_n,v_n\>) \pmod 2.\label{canonical L eq}
\end{eqnarray}
\end{definition}
This concludes the definition of the informational model $I_H=\<\mathcal{A},\{D_a\}_{a\in \mathcal{A}},\|\cdot\|,\mathcal{L}\>$.

\begin{definition}
Let $\mathbf{0}$ be the vector $\<f_a\>_{a\in\mathcal{A}}$ such that $f_a= 0$ for each $a\in\mathcal{A}$.
\end{definition}

\begin{lemma}
$\mathbf{0}\in\mathcal{L}$.
\end{lemma}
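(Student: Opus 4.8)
The plan is to verify directly that the zero vector $\mathbf{0}$ satisfies the defining constraint~(\ref{canonical L eq}) of $\mathcal{L}$ given in Definition~\ref{canonical L}. Since $\mathbf{0}=\<f_a\>_{a\in\mathcal{A}}$ with $f_a=0$ for every $a\in\mathcal{A}$, this amounts to checking that equation~(\ref{canonical L eq}) holds when every function value appearing in it is $0$.

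First I would fix an arbitrary edge-initiated path $\<e_1,v_1,e_2,\dots,e_n,v_n\>$. The left-hand side of~(\ref{canonical L eq}) is then $f_{e_1}(\<e_1,v_1,e_2,\dots,e_n,v_n\>)+\sum_{u\in in(e_1)}f_u(\<u,e_1,v_1,e_2,\dots,e_n,v_n\>)$. Because $f_{e_1}=0$ and $f_u=0$ for every $u\in in(e_1)$, each summand is $0$, so the left-hand side equals $0 \pmod 2$. The right-hand side is $f_{v_1}(\<v_1,e_2,\dots,e_n,v_n\>)$, which equals $0$ since $f_{v_1}=0$. Hence the two sides agree modulo $2$, and since the path was arbitrary, $\mathbf{0}$ satisfies every instance of~(\ref{canonical L eq}).

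One should also note that $\mathbf{0}\in\prod_{a\in\mathcal{A}}D_a$, i.e. that the constant-$0$ function on the paths initiated at $a$ genuinely lies in $D_a$ for each $a$; this is immediate from Definition~\ref{Da def}, since $D_a$ consists of \emph{all} functions from the relevant set of paths into $\{0,1\}$, and in particular contains the constant map to $0$. Combining this with the verification of~(\ref{canonical L eq}) gives $\mathbf{0}\in\mathcal{L}$ by Definition~\ref{canonical L}.

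There is essentially no obstacle here: the statement is a sanity check confirming that the informational model $I_H$ has a nonempty set of legitimate vectors, and the only mild subtlety is to make sure every function symbol occurring in~(\ref{canonical L eq}) — both the edge-component $f_{e_1}$, the vertex-components $f_u$ for $u\in in(e_1)$, and the vertex-component $f_{v_1}$ — is being evaluated on a bona fide path so that the equation is well-typed before substituting $0$ everywhere. That typing is guaranteed by the definitions of the two kinds of paths, so the proof is a short routine check.
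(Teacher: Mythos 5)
Your proof is correct and follows exactly the paper's argument: both verify that every instance of equation~(\ref{canonical L eq}) reduces to $0+\sum_{u\in in(e_1)}0=0\pmod 2$ when all components are the constant-zero function. The extra remark that the constant-zero function lies in each $D_a$ is a harmless (and accurate) elaboration of what the paper leaves implicit.
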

\begin{proof}
Equation~(\ref{canonical L eq}) holds for vector $\mathbf{0}$ because $0 + \sum_{u\in in(e_1)}0 = 0 \pmod 2$.
\end{proof}

The dissemination of encrypted information described above on the hypergraph depicted in Figure~\ref{AkF figure} takes place from a head vertex of an edge to the tail vertices of the edge. This means that the bit stored at the head vertex could be determined based on the encryption key and the bits stored at the tail vertices. In other words, information flows in the direction that is opposite to the direction of the edge, but the functional dependency exists in the same direction as the edge. This observation is the key to the understanding of the next lemma.

\begin{lemma}\label{AkF lemma}
For any set of vertices $A\subseteq V$, any set of edges $F\subseteq E$, any $k\ge 0$, and any two vectors $\ell_1,\ell_2\in\mathcal{L}$, if $\ell_1=_{A,F}\ell_2$, then $\ell_1=_{A^k_F} \ell_2$.
\end{lemma}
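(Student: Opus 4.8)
\textbf{Proof proposal for Lemma~\ref{AkF lemma}.}

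The plan is to argue by induction on $k$. The base case $k=0$ is immediate: $A^0_F=A$ by Definition~\ref{AkF}, and the assumption $\ell_1=_{A,F}\ell_2$ already gives $\ell_1=_A\ell_2$. For the induction step, I would assume $\ell_1=_{A^k_F}\ell_2$ and prove $\ell_1=_{A^{k+1}_F}\ell_2$. By Definition~\ref{AkF}, it suffices to show that $\ell_1$ and $\ell_2$ agree on every vertex $v\in out(f)$ for each edge $f\in F$ with $in(f)\subseteq A^k_F$; combined with the induction hypothesis this covers all of $A^{k+1}_F$. So fix such an edge $f=e$ and such a head vertex $v\in out(e)$, and fix an arbitrary path $\pi=\<v,e_2,\dots,e_n,v_n\>$ initiated at $v$. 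The goal is to show $f^1_v(\pi)=f^2_v(\pi)$, where $\ell_i=\<f^i_a\>_{a\in\mathcal{A}}$.

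The key move is to prepend $e$ to this path: since $v\in out(e)$, the sequence $\<e,v,e_2,\dots,e_n,v_n\>$ is a path initiated at edge $e$, so it satisfies the defining equation~(\ref{canonical L eq}) of $\mathcal{L}$ for both $\ell_1$ and $\ell_2$. That equation expresses $f^i_v(\pi)$ (the right-hand side, with $v_1=v$) as $f^i_e(\<e,v,e_2,\dots,e_n,v_n\>)+\sum_{u\in in(e)}f^i_u(\<u,e,v,e_2,\dots,e_n,v_n\>)\pmod 2$. Now I compare the two sides term by term across $\ell_1$ and $\ell_2$: the term $f^i_e(\cdots)$ agrees because $e\in F$ and $\ell_1=_{A,F}\ell_2$ implies $\ell_1=_{\{e\}}\ell_2$; and each term $f^i_u(\cdots)$ with $u\in in(e)$ agrees because $in(e)\subseteq A^k_F$ and the induction hypothesis $\ell_1=_{A^k_F}\ell_2$ gives $f^1_u=f^2_u$ as whole functions. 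Hence the two right-hand sides are equal modulo $2$, so $f^1_v(\pi)=f^2_v(\pi)$. Since $\pi$ was arbitrary, $f^1_v=f^2_v$, i.e. $\ell_1=_{\{v\}}\ell_2$.

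Collecting this over all eligible edges $f$ and all their heads, together with the induction hypothesis, yields $\ell_1=_{A^{k+1}_F}\ell_2$, completing the induction. The only mildly delicate point — and the step I expect to require the most care in writing up — is the bookkeeping with the equation~(\ref{canonical L eq}): one must correctly match the edge-initiated path $\<e,v,e_2,\dots,e_n,v_n\>$ against the template $\<e_1,v_1,e_2,\dots,e_n,v_n\>$ (so $e_1=e$, $v_1=v$), and be sure that the vertex-initiated paths appearing as arguments of $f^i_v$ and $f^i_u$ are exactly the legitimate paths $\<v,e_2,\dots,e_n,v_n\>$ and $\<u,e,v,e_2,\dots,e_n,v_n\>$ — which they are, precisely because $v\in out(e)$ and $u\in in(e)$. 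Everything else is routine.
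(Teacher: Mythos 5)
Your proposal is correct and follows essentially the same route as the paper's own proof: induction on $k$, with the induction step handled by prepending the edge $e$ to an arbitrary path initiated at a head vertex and invoking equation~(\ref{canonical L eq}) together with $\ell_1=_{\{e\}}\ell_2$ (from $e\in F$) and $f^1_u=f^2_u$ for $u\in in(e)\subseteq A^k_F$ (from the induction hypothesis). The bookkeeping you flag as delicate is handled in the paper in exactly the way you describe.
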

\begin{proof}
We prove the lemma by induction on $k$. If $k=0$, then $A^k_F=A$ by Definition~\ref{AkF}\idot Thus, assumption $\ell_1=_{A,F} \ell_2$ implies that $\ell_1=_{A} \ell_2$ by Definition~\ref{=set}\idot

Suppose that $\ell_1=_{A^k_F} \ell_2$. We need to prove that $\ell_1=_{A^{k+1}_F} \ell_2$. By Definition~\ref{AkF}, it suffices to prove that $\ell_1=_b \ell_2$ for each $e\in F$ and each $b\in out(e)$, where $in(e)\subseteq A^k_F$. See Figure~\ref{path proof figure}.

\begin{figure}[ht]
\begin{center}
\vspace{0mm}
\scalebox{.5}{\includegraphics{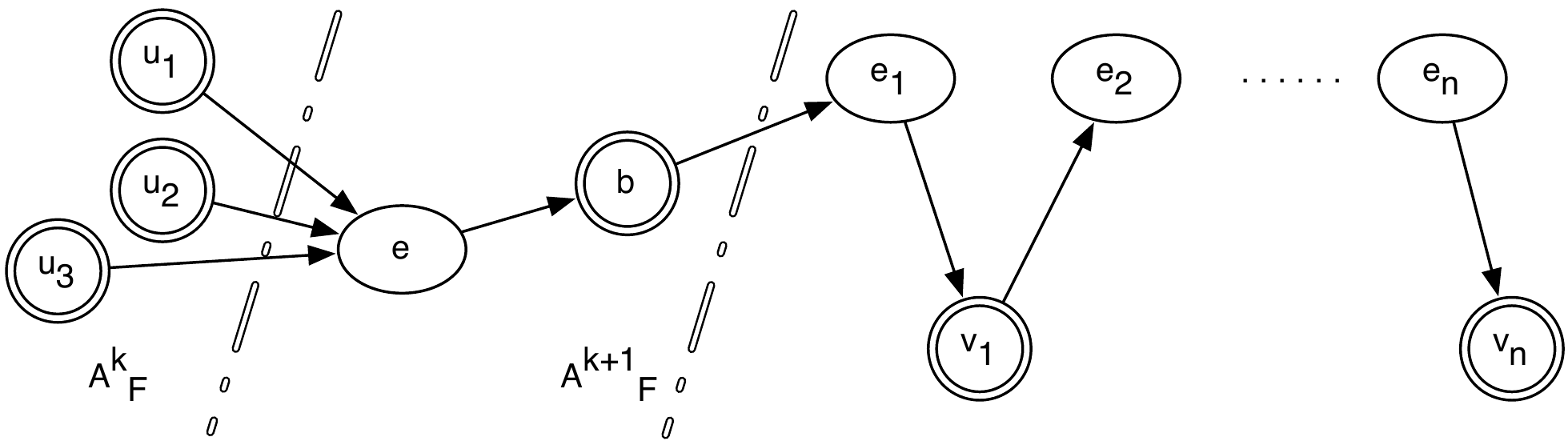}}
\vspace{5mm}
\footnotesize
\caption{Illustration to Lemma~\ref{AkF lemma}.}\label{path proof figure}
\vspace{0cm}
\end{center}
\vspace{0mm}
\end{figure}

Let $\ell_1=\<f^1_a\>_{a\in\mathcal{A}}$ and $\ell_2=\<f^2_a\>_{a\in\mathcal{A}}$. It suffices to show that 
$$f^1_b(\<b,e_1,v_1,\dots,e_n,v_n\>)=f^2_b(\<b,e_1,v_1,\dots,e_n,v_n\>)$$
for each path $\<b,e_1,v_1,\dots,e_n,v_n\>$ initiated at vertex $b$. Indeed, by Definition~\ref{canonical L},
\begin{eqnarray*}
f^1_b(\<b,e_1,v_1,\dots,e_n,v_n\>) &=& f^1_e(\<e,b,e_1,v_1,\dots,e_n,v_n\>) +\\
 && \sum_{u\in in(e)}f^1_u(\<u,e,b,e_1,v_1,\dots,e_n,v_n\>) \pmod 2.
\end{eqnarray*}
Recall that $e\in F$, and so, by Definition~\ref{=set}, assumption $\ell_1=_{A,F}\ell_2$ of the lemma implies  that $\ell_1=_e \ell_2$. Hence, $f^1_e = f^2_e$. Then,
\begin{eqnarray*}
f^1_b(\<b,e_1,v_1,\dots,e_n,v_n\>) &=& f^2_e(\<e,b,e_1,v_1,\dots,e_n,v_n\>) +\\
 && \sum_{u\in in(e)}f^1_u(\<u,e,b,e_1,v_1,\dots,e_n,v_n\>) \pmod 2.
\end{eqnarray*}
By induction hypothesis, $\ell_1=_{A^k_F} \ell_2$. Additionally, $in(e)\subseteq A^k_F$ by the choice of edge $e$. Thus, $\ell_1=_{in(e)} \ell_2$ by Definition~\ref{=set}\idot Hence, $f^1_u = f^2_u$ for each $u\in in(e)$. Then,
\begin{eqnarray*}
f^1_b(\<b,e_1,v_1,\dots,e_n,v_n\>) &=& f^2_e(\<e,b,e_1,v_1,\dots,e_n,v_n\>) +\\
 && \sum_{u\in in(e)}f^2_u(\<u,e,b,e_1,v_1,\dots,e_n,v_n\>) \pmod 2.
\end{eqnarray*}
Therefore, 
$f^1_b(\<b,e_1,v_1,\dots,e_n,v_n\>)=f^2_b(\<b,e_1,v_1,\dots,e_n,v_n\>)$,
by Definition~\ref{canonical L}\idot
\end{proof}

\begin{lemma}\label{A*F lemma}
For any set of vertices $A\subseteq V$, any set of edges $F\subseteq E$, and any two vectors $\ell_1,\ell_2\in\mathcal{L}$, if $\ell_1=_{A,F}\ell_2$, then $\ell_1=_{A^*_F} \ell_2$.
\end{lemma}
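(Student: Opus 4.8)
The plan is to combine the two closure lemmas that immediately precede this statement. First I would apply Lemma~\ref{A*F=AkF} to the given sets $A$ and $F$ to obtain a non-negative integer $k$ with $A^*_F=A^k_F$; this is exactly the point at which the finiteness of $V$ (and hence the stabilization of the non-decreasing chain $A^0_F\subseteq A^1_F\subseteq\cdots$) is used. Then, with this particular $k$ in hand and using the hypothesis $\ell_1=_{A,F}\ell_2$, I would invoke Lemma~\ref{AkF lemma} to conclude $\ell_1=_{A^k_F}\ell_2$. Since $A^k_F=A^*_F$, this is precisely $\ell_1=_{A^*_F}\ell_2$, which is the desired conclusion.

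There is essentially no obstacle here: all the substantive work — the induction on $k$ that peels off one layer $A^{k}_F\to A^{k+1}_F$ of the closure at a time, together with the modulo-two ``decoding'' computation from Definition~\ref{canonical L} that recovers the bit at a head vertex $b\in out(e)$ from the key at $e$ and the bits at the vertices of $in(e)$ — has already been done inside Lemma~\ref{AkF lemma}. The only point worth noting is that Lemma~\ref{AkF lemma} is stated uniformly for \emph{all} $k\ge 0$, so it may be specialized to the single value of $k$ furnished by Lemma~\ref{A*F=AkF}; no additional uniformity is required. Thus the proof is a short two-line deduction, and the lemma should be regarded as a convenient repackaging of Lemmas~\ref{A*F=AkF} and~\ref{AkF lemma} into the form that will be used when comparing non-satisfaction in the hypergraph $H$ with non-satisfaction in the informational model $I_H$.
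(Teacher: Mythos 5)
Your proof is correct and follows essentially the same route as the paper, which likewise disposes of the lemma in one line by reducing it to Lemma~\ref{AkF lemma}. The only cosmetic difference is that the paper cites Definition~\ref{A*F} (agreement on every partial closure $A^k_F$ gives agreement on their union $A^*_F$) rather than routing through the stabilization Lemma~\ref{A*F=AkF} to pick a single $k$; both deductions are immediate.
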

\begin{proof}
The statement of the lemma follows from Lemma~\ref{AkF lemma} and Definition~\ref{A*F}\idot
\end{proof}

A cut $(V_1,V_2)$ is a partition of the set of all vertices $E$. We consider cuts to be directed in the sense that  $(V_1,V_2)$ and $(V_2,V_1)$ are two different cuts.

\begin{definition}\label{cross def}
An edge $e\in E$ is called a crossing edge of a cut $(V_1,V_2)$, if $in(e)\subseteq V_1$ and $out(e)\cap V_2\neq \varnothing$. 
\end{definition}

\begin{figure}[ht]
\begin{center}
\vspace{0mm}
\scalebox{.5}{\includegraphics{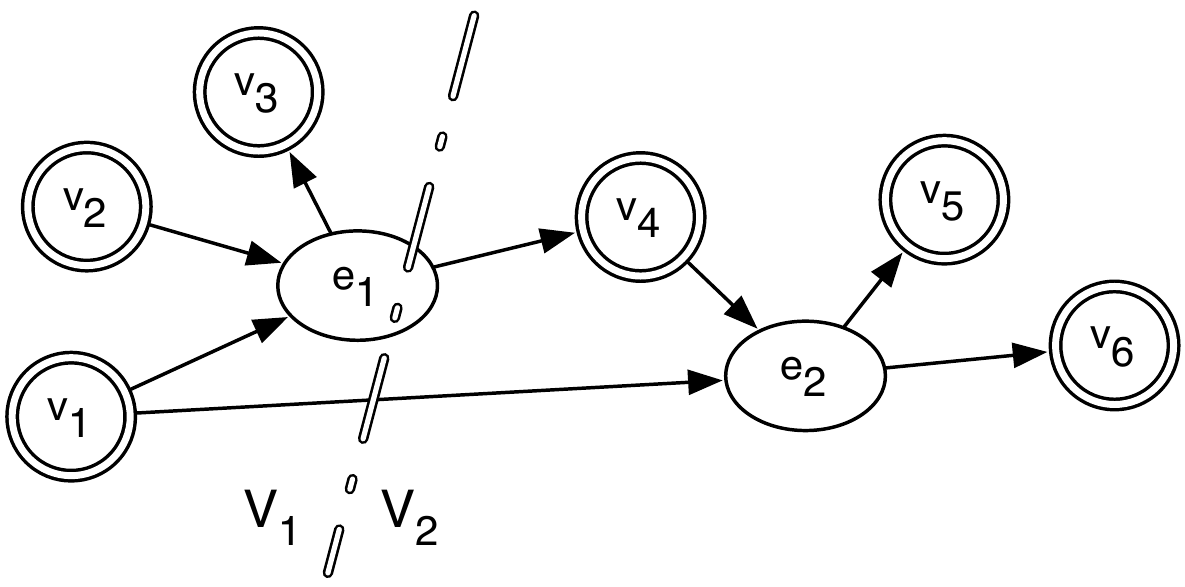}}
\vspace{5mm}
\footnotesize
\caption{$Cross(c)=\{e_1\}$, where $c=(\{v_1,v_2,v_3\},\{v_4,v_5,v_6\})$ }\label{Cross figure}
\vspace{0cm}
\end{center}
\vspace{0mm}
\end{figure}

The set of all crossing edges of cut $c=(V_1,V_2)$ is denoted by $Cross(c)$. For example, edge $e_1$ is the only crossing edge of the cut $c$ depicted in Figure~\ref{Cross figure}. Generally speaking, cuts $(V_1,V_2)$ and $(V_2,V_1)$ have different sets of crossing edges.

\begin{lemma}\label{backtrack lemma}
For each $c=(V_1,V_2)$, if $e\notin Cross(c)$ and $out(e)\cap V_2\neq \varnothing$, then $in(e)\cap V_2\neq\varnothing$. 
\end{lemma}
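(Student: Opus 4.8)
The plan is to simply unfold the definition of a crossing edge and reason within it by elimination. By Definition~\ref{cross def}, an edge $e$ lies in $Cross(c)$ exactly when both conditions $in(e)\subseteq V_1$ and $out(e)\cap V_2\neq\varnothing$ hold simultaneously. Since the hypothesis gives $e\notin Cross(c)$, at least one of these two conjuncts must fail for $e$.

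First I would use the second hypothesis, $out(e)\cap V_2\neq\varnothing$, to conclude that it cannot be the second conjunct that fails. Hence the first conjunct must fail, that is, $in(e)\not\subseteq V_1$, and so there is some vertex $v\in in(e)$ with $v\notin V_1$. Then I would invoke the fact that $(V_1,V_2)$ is a cut, i.e. a partition of the vertex set $V$: since $in(e)\subseteq V$ and $v\notin V_1$, we must have $v\in V_2$. Therefore $v\in in(e)\cap V_2$, which shows $in(e)\cap V_2\neq\varnothing$, as required.

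There is no real obstacle here; the claim is an immediate consequence of Definition~\ref{cross def} together with the fact that a cut is a partition, so that any vertex not in $V_1$ is automatically in $V_2$. The only point worth a word of care is the (purely cosmetic) remark that, despite the surrounding text, a cut partitions the set of \emph{vertices} rather than the set of edges; this is what licenses the final step.
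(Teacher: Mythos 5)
Your proof is correct and is essentially the same argument as the paper's: the paper phrases it contrapositively (assume $in(e)\cap V_2=\varnothing$, deduce $in(e)\subseteq V_1$, and conclude $e\in Cross(c)$), while you argue directly by elimination on the two conjuncts of Definition~\ref{cross def}, but the logical content is identical. Your side remark is also right --- the paper's sentence introducing cuts should read ``partition of the set of all vertices $V$'' rather than $E$.
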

\begin{proof}
Suppose that $in(e)\cap V_2=\varnothing$. Then, $in(e)\subseteq V_1$. Thus, we have  $out(e)\cap V_2\neq \varnothing$ and $in(e)\subseteq V_1$. Therefore, $e\in Cross(c)$ by Definition~\ref{cross def}\idot
\end{proof}

\begin{lemma}\label{FCross}
If $c=(A^*_F,V\setminus A^*_F)$, then 
$Cross(c)\cap F=\varnothing$, for each set of vertices $A\subseteq V$ and each set of edges $F\subseteq E$.
\end{lemma}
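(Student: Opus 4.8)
The plan is to argue by contradiction. Suppose there is an edge $e\in Cross(c)\cap F$, where $c=(A^*_F,V\setminus A^*_F)$. By Definition~\ref{cross def}, the assumption $e\in Cross(c)$ gives $in(e)\subseteq A^*_F$ and $out(e)\cap (V\setminus A^*_F)\neq\varnothing$; in particular there is a vertex $b\in out(e)$ with $b\notin A^*_F$. The idea is to show that $e\in F$ together with $in(e)\subseteq A^*_F$ forces $out(e)\subseteq A^*_F$, contradicting the existence of such a $b$.

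The key step is to invoke Lemma~\ref{A*F=AkF} to obtain $k\ge 0$ with $A^*_F=A^k_F$. Since $e\in F$ and $in(e)\subseteq A^*_F=A^k_F$, the recursive clause of Definition~\ref{AkF} yields
$$
out(e)\subseteq \bigcup_{\{f\in F\;|\;in(f)\subseteq A^k_F\}} out(f) \subseteq A^{k+1}_F.
$$
But $A^{k+1}_F\subseteq A^*_F$ by Definition~\ref{A*F}, and in fact $A^{k+1}_F=A^k_F=A^*_F$ by the choice of $k$ from Lemma~\ref{A*F=AkF}. Hence $out(e)\subseteq A^*_F$, so every vertex in $out(e)$ lies in $A^*_F$, contradicting $b\in out(e)\setminus A^*_F$. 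Therefore no such $e$ exists and $Cross(c)\cap F=\varnothing$.

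There is no real obstacle here: the only subtlety is making sure one uses the stabilized index $k$ from Lemma~\ref{A*F=AkF} rather than an arbitrary one, so that $A^{k+1}_F$ can be identified with $A^*_F$. One could equally phrase the argument directly from Definition~\ref{A*F}: if $in(e)\subseteq A^*_F=\bigcup_{j\ge 0}A^j_F$ and $in(e)$ is finite, then $in(e)\subseteq A^m_F$ for some $m$, whence $out(e)\subseteq A^{m+1}_F\subseteq A^*_F$; this avoids even citing Lemma~\ref{A*F=AkF}, though $in(e)\subseteq V$ is finite by Definition~\ref{hyper model} so either route works.
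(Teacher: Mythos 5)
Your proposal is correct and follows essentially the same argument as the paper: assume an edge $e\in Cross(c)\cap F$, use Lemma~\ref{A*F=AkF} to get $k$ with $A^*_F=A^k_F$, apply the recursive clause of Definition~\ref{AkF} to conclude $out(e)\subseteq A^{k+1}_F\subseteq A^*_F$, and derive a contradiction with $out(e)\cap(V\setminus A^*_F)\neq\varnothing$. The alternative route you sketch via finiteness of $in(e)$ is a harmless variation, but the core reasoning matches the paper's proof.
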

\begin{proof}
Suppose that there is an edge $e\in F$ such that $e\in Cross(c)$. Thus, $in(e)\subseteq A^*_F$ and $out(e)\cap (V\setminus A^*_F)\neq \varnothing$, by Definition~\ref{cross def}\idot By Lemma~\ref{A*F=AkF}, there is $k\ge 0$ such that $A^*_F=A^k_F$. Hence, $in(e)\subseteq A^k_F$. Thus, $out(e)\subseteq A^{k+1}_F$ by Definition~\ref{AkF}\idot Therefore, by Definition~\ref{A*F}, $out(e)\subseteq A^*_F$, which yields a contradiction with $out(e)\cap (V\setminus A^*_F)\neq \varnothing$.
\end{proof}

The next definition specifies the core construction in the proof of the completeness by introducing the notion of a {\em cut-limited inverted tree rooted at a vertex $v$}. Informally, such a tree starts at vertex $v$ and grows through the edges and vertices of the hypergraph.  The tree is {\em inverted} because it grows in the direction against that of the edges. From each vertex, the tree brunches into each edge that has this vertex as a head. From each edge, the tree expands through only one of the tail vertices. Furthermore, if the edge is a crossing edge of the cut, then the tree does not expand from this edge at all. The tree can potentially loop through the hypergraph and be infinite. 
Figure~\ref{Tree figure} shows a cut-limited inverted tree rooted at vertex $m$. Note that this tree is inverted as it expands in the direction opposite to the direction of the edges. The tree is limited by cut $(V_1,V_2)$ and, thus, it does not continue through the crossing edge $r$ of this cut. The tree brunches at vertex $d$ and continues through edges $r$, $s$, and $t$ of which vertex $d$ is a head. Since the tree does not brunch at edges, edge $y$ can only expand  either through tail $h$ or through tail $k$. The inverted tree depicted in Figure~\ref{Tree figure} expands through tail $h$.

\begin{figure}[ht]
\begin{center}
\vspace{0mm}
\scalebox{.5}{\includegraphics{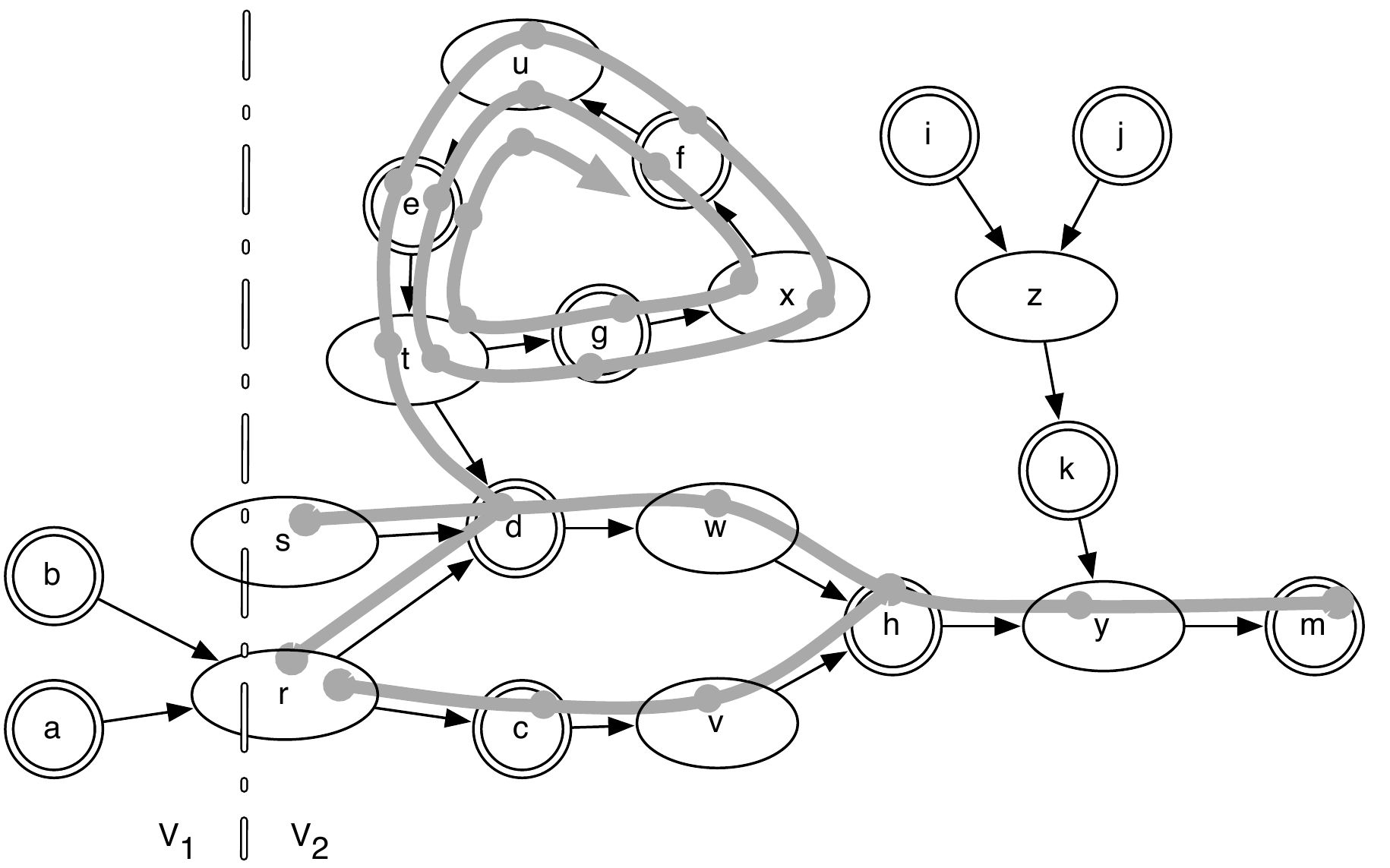}}
\vspace{5mm}
\footnotesize
\caption{A cut-limited inverted tree rooted at vertex $m$.}\label{Tree figure}
\vspace{0cm}
\end{center}
\vspace{0mm}
\end{figure}

The next definition specifies the notion of a cut-limited inverted tree. We formally represent tree as a collection of paths.

\begin{definition}\label{tree def}
For any cut $c=(V_1,V_2)$, $c$-limited inverted tree rooted at vertex $v\in V_2$  is any minimal set of paths $T$ such that
\begin{enumerate}
\item $\<v\>\in T$,
\item for each vertex-initiated path $\<v_0,e_1,v_1,e_1,\dots, e_n,v_n\>\in T$ and edge $e_0\in E$ such that $v_0\in out(e_0)$, we have $\<e_0,v_0,e_1,v_1,e_1,\dots, e_n,v_n\>\in T$,
%\item $\<e_0,v_0,e_1,v_1,e_1,\dots, e_n,v_n\>\in T$ for each edge $e_0\in E$ and each vertex-initiated path $\<v_0,e_1,v_1,e_1,\dots, e_n,v_n\>\in T$ such that $v_0\in out(e_0)$, 
\item for each edge-initiated path $\<e_1,v_1,e_1,\dots, e_n,v_n\>\in T$ if $e_1\notin Cross(c)$ then there is exactly one $v_0\in in(e_1)\cap V_2$ such that $\<v_0,e_1,v_1,e_1,\dots, e_n,v_n\>\in T$.
\end{enumerate}
\end{definition}

Because set $T$ in the above definition is required to be a minimal set satisfying the given conditions, all paths in this set terminate with the vertex $v$ at which the tree is ``rooted".

\begin{lemma}\label{tree exists}
For any cut $c=(V_1,V_2)$ and any $v\in V_2$ there is a $c$-limited inverted tree rooted at vertex $v$.
\end{lemma}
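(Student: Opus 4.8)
The plan is to construct the tree $T$ explicitly as a union of stages $T = \bigcup_{n \ge 0} T_n$, where $T_n$ collects the paths in the tree of combinatorial length exactly $n$ (i.e. involving $n$ edges), and then verify that this $T$ satisfies conditions 1--3 of Definition~\ref{tree def} and is minimal. We start with $T_0 = \{\<v\>\}$, which forces condition~1. Given $T_n$, we build $T_{n+1}$ by processing each path already in $T_n$ according to whether it currently begins with a vertex or with an edge.

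Concretely, suppose $P = \<v_0,e_1,v_1,\dots,e_n,v_n\> \in T_n$ is a vertex-initiated path. Condition~2 dictates that for \emph{every} edge $e_0 \in E$ with $v_0 \in out(e_0)$, the extended path $\<e_0,v_0,e_1,\dots,e_n,v_n\>$ must lie in $T$; I would put all these extensions into $T_{n+1}$. Now suppose $Q = \<e_1,v_1,\dots,e_n,v_n\> \in T_n$ is an edge-initiated path (this can only happen for $n \ge 1$). If $e_1 \in Cross(c)$, condition~3 imposes nothing and we add no extension of $Q$. If $e_1 \notin Cross(c)$, condition~3 requires \emph{exactly one} $v_0 \in in(e_1) \cap V_2$ with $\<v_0,e_1,v_1,\dots,e_n,v_n\> \in T$; here is where a choice must be made, so I would first argue $in(e_1) \cap V_2 \ne \varnothing$ and then select one such $v_0$ (using the axiom of choice if $E$ is infinite, applied to the set of all edge-initiated paths being processed). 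The nonemptiness of $in(e_1) \cap V_2$ is exactly Lemma~\ref{backtrack lemma}: since $e_1 \notin Cross(c)$, and since $Q$ being a path means $v_1 \in out(e_1)$ with $v_1 \in V_2$ (this invariant is maintained throughout — see below), we get $in(e_1) \cap V_2 \ne \varnothing$.

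The key invariant to carry through the induction is that every vertex appearing in any path in $T_n$ lies in $V_2$. This holds for $T_0$ since $v \in V_2$. It is preserved when we extend a vertex-initiated path by an edge $e_0$ (no new vertex is added), and when we extend an edge-initiated path $Q$ with $e_1 \notin Cross(c)$ by a vertex $v_0 \in in(e_1) \cap V_2 \subseteq V_2$. The invariant is what lets us keep applying Lemma~\ref{backtrack lemma}, because it guarantees $out(e_1) \cap V_2 \ne \varnothing$ (the vertex immediately following $e_1$ in $Q$ witnesses this). With $T = \bigcup_n T_n$ defined this way, conditions 1--3 hold by construction, and minimality holds because at each stage we added only the paths that one of the three conditions \emph{forced} us to add (for condition~3 we added exactly one, the minimum possible). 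I would also note that every path in $T$ terminates at $v$ by construction, matching the remark after Definition~\ref{tree def}.

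The main obstacle I anticipate is purely a matter of careful bookkeeping rather than a deep difficulty: one must make sure the stage-by-stage construction genuinely produces a \emph{set of paths} closed under exactly the stated conditions and not accidentally larger or smaller. In particular, one must check that when the same edge-initiated path $Q$ could in principle arise as an extension along two different branches, the single choice of $v_0$ is made consistently (this is automatic if we index choices by the path $Q$ itself rather than by the branch that produced it). A minor subtlety is that $E$ may be infinite, so condition~2 can add infinitely many children to a single vertex-initiated path and the tree may have infinite depth; this is harmless since Definition~\ref{tree def} explicitly allows infinite trees, and the union $\bigcup_n T_n$ still makes sense. No finiteness or well-foundedness of $V$ is needed for existence.
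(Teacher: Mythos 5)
Your construction is essentially identical to the paper's: a stage-by-stage build starting from $\{\langle v\rangle\}$, prepending all edges having the current initial vertex as a head, then choosing exactly one tail in $in(e)\cap V_2$ for each non-crossing edge, with nonemptiness supplied by Lemma~\ref{backtrack lemma} and the invariant that all vertices stay in $V_2$. The only differences are cosmetic (the paper alternates vertex/edge stages as $T_{2k}$, $T_{2k+1}$ where you index by number of edges), and you additionally spell out the minimality and consistency-of-choice points that the paper leaves implicit.
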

\begin{proof}
We recursively construct an infinite sequence of sets of paths $T_0,T_1,T_2,\dots$, whose vertices are all in set $V_2$, as follows:
\begin{enumerate}
\item $T_0=\{\<v\>\}$.
\item For each $k\ge 0$, let $$T_{2k+1}=\{\<e,v_1,e_1,\dots,v_n,e_n,v\>\;|\; \<v_1,e_1,\dots,v_n,e_n,v\>\in T_{2k},v_1\in out(e)\}.$$
\item For each path $\pi=\<e_0,v_1,e_1,\dots,v_n,e_n,v\>\in T_{2k+1}$ such that $e_0\notin Cross(c)$, choose any vertex $u\in in(e_0)\cap V_2$. Since $v_1\in out(e_0)\cap V_2$, such vertex $u$ exists by Lemma~\ref{backtrack lemma}. Construct a path $\<u,e_0,v_1,e_1,\dots,v_n,e_n,v\>$. Let $T_{2k+2}$ be the set of all paths constructed in such a way, taking only one path (and only one vertex $u$) for each path $\pi$. 
\end{enumerate}
Let $T=\bigcup_{i\ge 0}T_i$.
\end{proof}

\begin{lemma}\label{tree bounded lemma}
For any cut $c=(V_1,V_2)$, any $c$-limited inverted tree rooted at vertex $v\in V_2$, and any $\pi\in T$, all vertices in path $\pi$ belong to set $V_2$.
\end{lemma}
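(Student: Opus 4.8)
The plan is to prove Lemma~\ref{tree bounded lemma} by induction on the structure of the tree, exploiting the fact that a $c$-limited inverted tree $T$ is defined as the \emph{minimal} set of paths satisfying conditions 1--3 of Definition~\ref{tree def}. Minimality means that every path $\pi\in T$ must have been ``forced'' into $T$ by a finite number of applications of clauses 2 and 3, starting from the single path $\<v\>$ required by clause 1. So I would set up an induction on the number of such applications (equivalently, on the length of $\pi$, since each application of clause 2 or 3 prepends exactly one element to the front of an existing path).

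First I would handle the base case: the only path of length one in $T$ is $\<v\>$ (any other one-element path $\<u\>$ would make $T\setminus\{\<u\>\}$ still satisfy all three conditions, contradicting minimality), and $v\in V_2$ by hypothesis. For the induction step, take a path $\pi\in T$ of length $\ge 2$. By minimality, $\pi$ must be obtainable from a strictly shorter path in $T$ via clause 2 or clause 3 — otherwise $T\setminus\{\pi\}$ would still satisfy 1--3. If $\pi$ arose via clause 2, then $\pi=\<e_0,v_0,e_1,\dots,e_n,v_n\>$ where $\<v_0,e_1,\dots,e_n,v_n\>\in T$ is shorter; by the induction hypothesis all its vertices, in particular $v_0$, lie in $V_2$, and $\pi$ introduces no new vertices. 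If $\pi$ arose via clause 3, then $\pi=\<v_0,e_1,v_1,\dots,e_n,v_n\>$ where $\<e_1,v_1,\dots,e_n,v_n\>\in T$ is shorter and $v_0\in in(e_1)\cap V_2$; again the induction hypothesis covers $v_1,\dots,v_n$, and the new vertex $v_0$ is explicitly required by clause 3 to lie in $V_2$. Either way, all vertices of $\pi$ belong to $V_2$.

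The one point requiring a little care — and the main obstacle — is making the ``by minimality, $\pi$ must arise via clause 2 or 3'' step rigorous, i.e., justifying that the minimal $T$ can be stratified by path length so that the induction is well-founded. Concretely, I would argue that if some $\pi\in T$ of length $\ge 2$ were \emph{not} of the form produced by clause 2 or 3 from a shorter member of $T$, then removing $\pi$ (and, to be safe, iterating this removal) yields a proper subset of $T$ still closed under clauses 1--3, contradicting minimality; alternatively, one can simply note that the construction in the proof of Lemma~\ref{tree exists} produces a $T$ whose members all lie in $V_2$ by construction, and that any minimal $T$ is contained in such a constructed one. I would phrase the cleanest version: induct on the length of $\pi$, and in the step use that the set of paths in $T$ of length less than $|\pi|$, closed off under clauses 2 and 3 up to length $|\pi|$, is a subset of $T$ satisfying 1--3, so by minimality it equals the restriction of $T$ to those lengths, forcing $\pi$ to be generated by clause 2 or 3 from a shorter element. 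The rest is the routine case analysis sketched above.
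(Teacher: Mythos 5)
Your proof is correct and takes essentially the same route as the paper, whose entire argument is the one-line observation that the lemma ``follows from condition 3 of Definition~\ref{tree def} and the minimality condition on $T$'': your induction on path length, using minimality to force every path of length at least two to arise from a shorter member of $T$ via clause 2 (which adds no vertex) or clause 3 (which adds only a vertex of $in(e_1)\cap V_2$), is exactly the spelled-out version of that argument. The point you flag as needing care --- justifying that minimality makes the length induction well-founded --- is handled correctly by your observation that an unforced path could be deleted while preserving conditions 1--3.
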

\begin{proof}
The statement of the lemma follows from condition 3 of Definition~\ref{tree def} and the minimality condition on $T$ of the same definition.
\end{proof}

The next lemma shows that two vectors can agree on a large set of attributes while not being equal on all attributes.

\begin{lemma}\label{flip lemma}
For any vector $\<f_a\>_{a\in\mathcal{A}}\in \mathcal{L}$, any cut $c=(V_1,V_2)$, and any $b\in V_2$, there is a vector $\<f'_a\>_{a\in\mathcal{A}}\in \mathcal{L}$ such that
\begin{enumerate}
\item $f'_u = f_u$ for each $u\in V_1$,
\item $f'_e = f_e$ for each $e\in E\setminus Cross(c)$,
\item $f'_b \neq f_b$.
\end{enumerate}
\end{lemma}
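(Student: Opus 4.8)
The plan is to fix a $c$-limited inverted tree $T$ rooted at $b$, which exists by Lemma~\ref{tree exists}, and to build $\<f'_a\>_{a\in\mathcal{A}}$ from $\<f_a\>_{a\in\mathcal{A}}$ by flipping, modulo $2$, exactly those values $f_a(\pi)$ for which $a\in V_2\cup Cross(c)$ and $\pi\in T$; that is, $f'_a(\pi)=f_a(\pi)+1\pmod 2$ whenever $a\in V_2\cup Cross(c)$ and $\pi\in T$, and $f'_a(\pi)=f_a(\pi)$ otherwise. Intuitively, $T$ records how a single flipped bit at the root $b$ must propagate ``against'' the edges to keep the parity constraints of $\mathcal{L}$ intact: it branches into all incoming edges at each vertex and escapes through a single tail at each non-crossing edge.

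Granting this construction, properties 1--3 of the lemma are immediate. A vertex $u\in V_1$ and an edge $e\in E\setminus Cross(c)$ are never in $V_2\cup Cross(c)$, so $f'_u=f_u$ and $f'_e=f_e$; and since $\<b\>\in T$ by condition 1 of Definition~\ref{tree def} and $b\in V_2$, we get $f'_b(\<b\>)=f_b(\<b\>)+1\neq f_b(\<b\>)$, hence $f'_b\neq f_b$.

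The real work is to verify $\<f'_a\>_{a\in\mathcal{A}}\in\mathcal{L}$, i.e.\ that Equation~(\ref{canonical L eq}) survives for every edge-initiated path $\pi=\<e_1,v_1,e_2,\dots,e_n,v_n\>$. Write $\sigma=\<v_1,e_2,\dots,v_n\>$ for the vertex suffix of $\pi$ and $\pi_u=\<u,e_1,v_1,\dots,v_n\>$ for $u\in in(e_1)$. Since~(\ref{canonical L eq}) holds for $\<f_a\>_{a\in\mathcal{A}}$, it is enough to show that the numbers of flipped terms on the two sides of~(\ref{canonical L eq}) are equal modulo $2$. By Lemma~\ref{tree bounded lemma} every path in $T$ has all of its vertices in $V_2$, so the right-hand term $f_{v_1}(\sigma)$ is flipped exactly when $\sigma\in T$, and I would first establish that $\sigma\in T$ iff $\pi\in T$ (the forward direction is condition 2 of Definition~\ref{tree def}; the backward direction uses minimality of $T$). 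If $\pi\notin T$, then $\sigma\notin T$, the term $f'_{e_1}(\pi)$ is unchanged, and no $\pi_u$ lies in $T$ (again by minimality of $T$, since the only way to obtain the path $\pi_u$ is by prepending a tail vertex to $\pi$), so both sides are unchanged. If $\pi\in T$, the right side flips exactly once; on the left, either $e_1\in Cross(c)$, in which case $in(e_1)\subseteq V_1$ forces every $\pi_u$-term to be unchanged while $f'_{e_1}(\pi)$ flips once, or $e_1\notin Cross(c)$, in which case $f'_{e_1}(\pi)$ is unchanged while condition 3 of Definition~\ref{tree def} guarantees that exactly one $u\in in(e_1)\cap V_2$ satisfies $\pi_u\in T$, flipping exactly one term. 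Either way the left side flips exactly once, matching the right, which completes the membership check.

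The step I expect to be the main obstacle is the careful use of the minimality of $T$: I need the facts ``$\pi\in T$ implies the vertex suffix of $\pi$ lies in $T$'' and ``$\pi\notin T$ implies no single-vertex extension of $\pi$ lies in $T$'', neither of which is literally among conditions 1--3 of Definition~\ref{tree def}. I would obtain both from a generation characterization of $T$: the subset of $T$ consisting of paths reachable from $\<b\>$ using only the extension moves of conditions 2 and 3 is itself closed under conditions 1--3, hence coincides with $T$ by minimality, and the two facts then follow by inspecting how each path of $T$ is generated. (As a fallback one could argue directly with the explicit layered tree $T=\bigcup_i T_i$ constructed in the proof of Lemma~\ref{tree exists}, for which the generation structure is manifest.)
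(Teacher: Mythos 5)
Your proposal is correct and follows essentially the same route as the paper: the same tree $T$ from Lemma~\ref{tree exists}, the same flip-along-$T$ definition of $f'$ (the paper writes the vertex case as $a\in V$ rather than your $a\in V_2$, which is equivalent by Lemma~\ref{tree bounded lemma}), and the same three-case parity check of equation~(\ref{canonical L eq}). The only difference is that you justify the closure facts about $T$ (suffixes of tree paths lie in $T$, one-step extensions of non-tree paths do not) explicitly via the minimality of $T$, where the paper asserts them directly from Definition~\ref{tree def}.
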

\begin{proof}
By Lemma~\ref{tree exists}, there exists a $c$-limited inverted tree $T$ rooted at vertex $b\in V_2$. Define vector $\<f'_a\>_{a\in\mathcal{A}}$ as follows:
\begin{equation}\label{f' def}
f'_a(\pi)=
\begin{cases}
1 + f_a(\pi), & \mbox{if $a\in V$ and $\pi\in T$},\\
1 + f_a(\pi), & \mbox{if $a\in Cross(c)$ and $\pi\in T$},\\
f_a(\pi), & \mbox{otherwise,}
\end{cases}
\pmod 2,
\end{equation}
where $\pi$ is a path initiated at $a$.
\begin{claim}
$\<f'_a\>_{a\in\mathcal{A}}\in\mathcal{L}$.
\end{claim}
\begin{proof}
We need to show that vector $\<f'_a\>_{a\in\mathcal{A}}$ satisfies equation~(\ref{canonical L eq}) of Definition~\ref{canonical L} for each edge-initiated path. There are three cases that we consider separately:

\begin{figure}[ht]
\begin{center}
\vspace{0mm}
\scalebox{.5}{\includegraphics{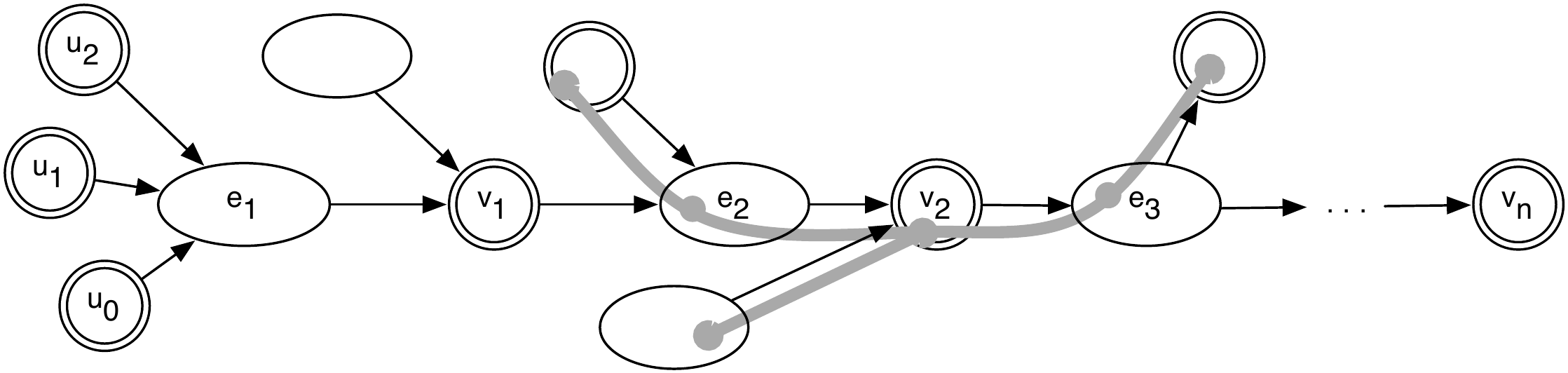}}
\vspace{5mm}
\footnotesize
\caption{Case I}\label{Case I}
\vspace{0cm}
\end{center}
\vspace{0mm}
\end{figure}

\noindent{\em Case I:} path $\<e_1,v_1,e_2,\dots, e_n,v_n\>$ does not belong to tree $T$. In this case, by Definition~\ref{tree def}, path $\<v_1,e_2,\dots, e_n,v_n\>$ does not belong to tree $T$ either. Neither do paths $\<u,e_1,v_1,e_2,\dots, e_n,v_n\>$ for each $u\in in(e_1)$. Thus, according to definition~(\ref{f' def}),
\begin{eqnarray*}
&&f'_{e_1}(\<e_1,v_1,e_2,\dots, e_n,v_n\>) + \sum_{u\in in(e_1)}f'_u(\<u,e_1,v_1,e_2,\dots, e_n,v_n\>)\nonumber\\
&=&f_{e_1}(\<e_1,v_1,e_2,\dots, e_n,v_n\>) + \sum_{u\in in(e_1)}f_u(\<u,e_1,v_1,e_2,\dots, e_n,v_n\>).
\end{eqnarray*}
Since $\<f_a\>_{a\in\mathcal{A}}\in\mathcal{L}$, by Definition~\ref{canonical L},
\begin{eqnarray*}
&&f_{e_1}(\<e_1,v_1,e_2,\dots, e_n,v_n\>)+
\sum_{u\in in(e_1)}f_u(\<u,e_1,v_1,e_2,\dots, e_n,v_n\>)
\nonumber\\
&=&f_{v_1}(\<v_1,e_2,\dots, e_n,v_n\>) \pmod 2.
\end{eqnarray*}
At the same time,  since path $\<v_1,e_2,\dots, e_n,v_n\>$ does not belong to tree $T$, by definition~(\ref{f' def}), we have $f_{v_1}(\<v_1,e_2,\dots, e_n,v_n\>) = f'_{v_1}(\<v_1,e_2,\dots, e_n,v_n\>)$. Therefore,  
\begin{eqnarray*}
&&f'_{e_1}(\<e_1,v_1,e_2,\dots, e_n,v_n\>)+
\sum_{u\in in(e_1)}f'_u(\<u,e_1,v_1,e_2,\dots, e_n,v_n\>) \nonumber\\
&=&f'_{v_1}(\<v_1,e_2,\dots, e_n,v_n\>) \pmod 2.
\end{eqnarray*}

\begin{figure}[ht]
\begin{center}
\vspace{0mm}
\scalebox{.5}{\includegraphics{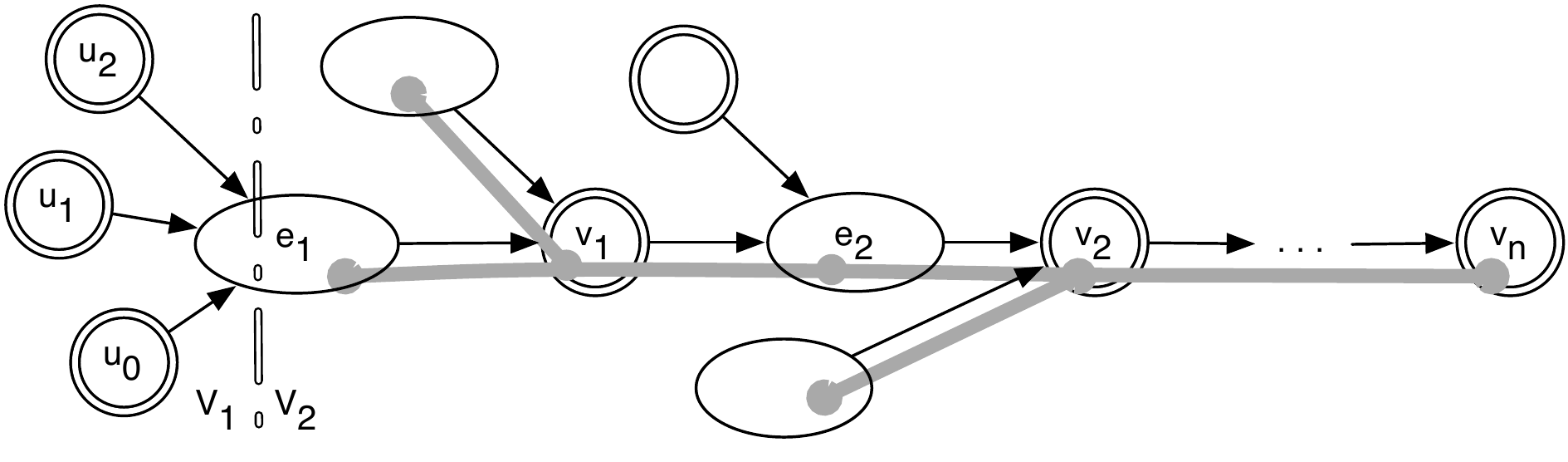}}
\vspace{5mm}
\footnotesize
\caption{Case II}\label{Case II}
\vspace{0cm}
\end{center}
\vspace{0mm}
\end{figure}

\noindent{\em Case II:} path $\<e_1,v_1,e_2,\dots, e_n,v_n\>$ belongs to tree $T$ and $e_1\in Cross(c)$. It follows from Definition~\ref{tree def} that path $\<v_1,e_2,\dots, e_n,v_n\>$ belongs to tree $T$ as well and that path $\<u,e_1,v_1,e_2,\dots, e_n,v_n\>$ does not belong to tree $T$ for each $u\in in(e_1)$. Hence, by definition~(\ref{f' def}),
\begin{eqnarray*}
&&
f'_{e_1}(\<e_1,v_1,e_2,\dots, e_n,v_n\>) + \sum_{u\in in(e_1)}f'_u(\<u,e_1,v_1,e_2,\dots, e_n,v_n\>) \nonumber\\
&=&f_{e_1}(\<e_1,v_1,e_2,\dots, e_n,v_n\>) + 1 + \sum_{u\in in(e_1)}f_u(\<u,e_1,v_1,e_2,\dots, e_n,v_n\>) \pmod 2.
\end{eqnarray*}
Since $\<f_a\>_{a\in\mathcal{A}}\in\mathcal{L}$, by Definition~\ref{canonical L},
\begin{eqnarray*}
&&f_{e_1}(\<e_1,v_1,e_2,\dots, e_n,v_n\>)+
\sum_{u\in in(e_1)}f_u(\<u,e_1,v_1,e_2,\dots, e_n,v_n\>) \nonumber\\
&=& f_{v_1}(\<v_1,e_2,\dots, e_n,v_n\>) \pmod 2.
\end{eqnarray*}
At the same time, since path $\<v_1,e_2,\dots, e_n,v_n\>$ belongs to tree $T$,  by definition~(\ref{f' def}) we have
$f'_{v_1}(\<v_1,e_2,\dots, e_n,v_n\>) = f_{v_1}(\<v_1,e_2,\dots, e_n,v_n\>) + 1 \pmod 2$. Therefore,
\begin{eqnarray*}
&&f'_{e_1}(\<e_1,v_1,e_2,\dots, e_n,v_n\>) + \sum_{u\in in(e_1)}f'_u(\<u,e_1,v_1,e_2,\dots, e_n,v_n\>) \nonumber\\
&=&f_{v_1}(\<v_1,e_2,\dots, e_n,v_n\>) + 1 =
f'_{v_1}(\<v_1,e_2,\dots, e_n,v_n\>) \pmod 2.
\end{eqnarray*}

\begin{figure}[ht]
\begin{center}
\vspace{0mm}
\scalebox{.5}{\includegraphics{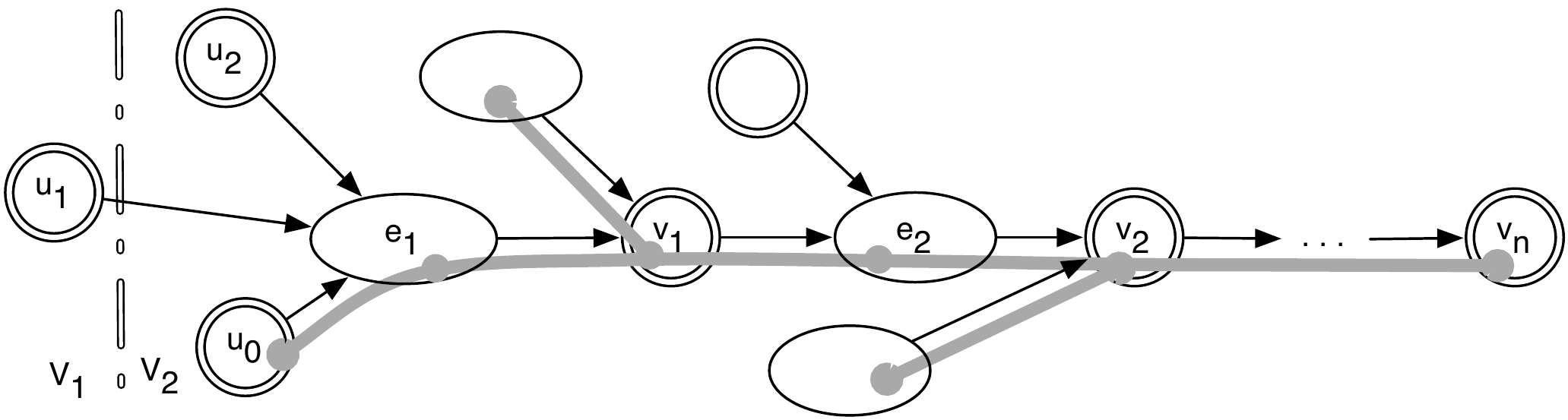}}
\vspace{5mm}
\footnotesize
\caption{Case III}\label{Case III}
\vspace{0cm}
\end{center}
\vspace{0mm}
\end{figure}

\noindent{\em Case III:} path $\<e_1,v_1,e_2,\dots, e_n,v_n\>$ belongs to tree $T$ and and $e_1\notin Cross(c)$. It follows from Definition~\ref{tree def} that path $\<v_1,e_2,\dots, e_n,v_n\>$ belongs to tree $T$ as well and that there is a unique $u_0\in in(e_1)\cap V_2$ such that path $\<u_0,e_1,v_1,e_2,\dots, e_n,v_n\>$ belongs to tree $T$. Hence, by definition~(\ref{f' def}),
\begin{eqnarray*}
&&
f'_{e_1}(\<e_1,v_1,e_2,\dots, e_n,v_n\>) +
\sum_{u\in in(e_1)}f'_u(\<u,e_1,v_1,e_2,\dots, e_n,v_n\>)\\
&=&
f'_{e_1}(\<e_1,v_1,e_2,\dots, e_n,v_n\>) +
f'_{u_0}(\<u_0,e_1,v_1,e_2,\dots, e_n,v_n\>) +\\
&&\sum_{u\in in(e_1)\setminus\{u_0\}}f'_u(\<u,e_1,v_1,e_2,\dots, e_n,v_n\>)\\
&&\hspace{5mm} \\
&=&
f_{e_1}(\<e_1,v_1,e_2,\dots, e_n,v_n\>) + 
(f_{u_0}(\<u_0,e_1,v_1,e_2,\dots, e_n,v_n\>)+ 1) +\\
&&\sum_{u\in in(e_1)\setminus\{u_0\}}f_u(\<u,e_1,v_1,e_2,\dots, e_n,v_n\>)\\
&&\hspace{5mm}  \\
&=&
f_{e_1}(\<e_1,v_1,e_2,\dots, e_n,v_n\>) + 1 + \\
&& \sum_{u\in in(e_1)}f_u(\<u,e_1,v_1,e_2,\dots, e_n,v_n\>)
 \pmod 2.
\end{eqnarray*}
Since $\<f_a\>_{a\in\mathcal{A}}\in\mathcal{L}$, by Definition~\ref{canonical L},
\begin{eqnarray*}
&& f_{e_1}(\<e_1,v_1,e_2,\dots, e_n,v_n\>) + \sum_{u\in in(e_1)}f_u(\<u,e_1,v_1,e_2,\dots, e_n,v_n\>) \nonumber\\
&=& f_{v_1}(\<v_1,e_2,\dots, e_n,v_n\>)\pmod 2.
\end{eqnarray*}
At the same time,
$f'_{v_1}(\<v_1,e_2,\dots, e_n,v_n\>) = f_{v_1}(\<v_1,e_2,\dots, e_n,v_n\>) + 1 \pmod 2$
by definition~(\ref{f' def}) since path $\<v_1,e_2,\dots, e_n,v_n\>$ belongs to tree $T$,
\begin{eqnarray*}
&& f'_{e_1}(\<e_1,v_1,e_2,\dots, e_n,v_n\>)+ 
\sum_{u\in in(e_1)}f'_u(\<u,e_1,v_1,e_2,\dots, e_n,v_n\>)\\
&=& f'_{v_1}(\<v_1,e_2,\dots, e_n,v_n\>) \pmod 2.
\end{eqnarray*}
This concludes the proof of the claim.
\end{proof}

\begin{claim}
$f'_u = f_u$ for each $u\in V_1$.
\end{claim}
\begin{proof}
Consider any vertex $u\in V_1$. Recall from Definition~\ref{Da def} that the domain of function $f_u$ is the set of all paths starting at vertex $u$. By Lemma~\ref{tree bounded lemma}, none of these paths belongs to tree $T$. Therefore, $f'_u = f_u$ due to definition~(\ref{f' def}).
\end{proof}

\begin{claim}
$f'_e = f_e$ for each $e\in E\setminus Cross(c)$.
\end{claim}
\begin{proof}
The statement of the claim follows from definition~(\ref{f' def}).
\end{proof}

\begin{claim}
$f'_b\neq f_b$.
\end{claim}
\begin{proof}
By Definition~\ref{tree def}, the single-element path $\<b\>$ belongs to tree $T$. Thus, $f'_b(\<b\>)=1+ f_b(\<b\>)$ due to definition~(\ref{f' def}). Therefore, $f'_b\neq f_b$.
\end{proof}

This concludes the proof of the lemma.
\end{proof}

\begin{lemma}\label{base case}
$H\Vdash A\rhd_p B$ if and only if $I_H\vDash A\rhd_p B$, for each $A,B\subseteq V$ and each non-negative real number $p$.
\end{lemma}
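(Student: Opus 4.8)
The plan is to prove the two directions of the biconditional separately, relying on the closure machinery (Definitions~\ref{AkF}, \ref{A*F}, Lemmas~\ref{A*F=AkF}, \ref{AkF lemma}, \ref{A*F lemma}) and the ``flip'' construction (Lemma~\ref{flip lemma}). The easy direction is $(\Rightarrow)$. Assume $H\Vdash A\rhd_p B$. By Definition~\ref{hyper sat} there is a finite $F\subseteq E$ with $w(F)\le p$ and $B\subseteq A^*_F$. Take the witness set of attributes $C=F\subseteq\mathcal{A}=V\cup E$. By Definition~\ref{price info model}, $\|C\|=\|F\|=\sum_{e\in F}w(e)=w(F)\le p$, so the budget constraint is met. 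Now suppose $\ell_1,\ell_2\in\mathcal{L}$ and $\ell_1=_{A,C}\ell_2$, i.e. $\ell_1=_{A,F}\ell_2$. By Lemma~\ref{A*F lemma}, $\ell_1=_{A^*_F}\ell_2$, and since $B\subseteq A^*_F$ we get $\ell_1=_B\ell_2$. By Definition~\ref{info sat}(1), $I_H\vDash A\rhd_p B$.

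For the hard direction $(\Leftarrow)$, I would argue by contraposition: assume $H\nVdash A\rhd_p B$ and produce witnesses $\ell_1,\ell_2\in\mathcal{L}$ showing $I_H\nvDash A\rhd_p B$. So suppose that for \emph{every} finite $F\subseteq E$ with $w(F)\le p$ we have $B\nsubseteq A^*_F$. Given an arbitrary finite witness $C\subseteq\mathcal{A}=V\cup E$ with $\|C\|\le p$, I want to find $\ell_1,\ell_2\in\mathcal{L}$ with $\ell_1=_{A,C}\ell_2$ but $\ell_1\neq_B\ell_2$. Write $C=C_V\cup C_E$ where $C_V=C\cap V$ and $C_E=C\cap E$. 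Since $\|C\|\le p<+\infty$ and each vertex costs $+\infty$ by Definition~\ref{price info model}, we must have $C_V=\varnothing$, so $C=C_E\subseteq E$ and $w(C_E)=\|C_E\|=\|C\|\le p$. By the contraposition hypothesis applied to $F=C_E$, there is a vertex $b\in B$ with $b\notin A^*_{C_E}$.

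Now consider the cut $c=(V_1,V_2)=(A^*_{C_E},\,V\setminus A^*_{C_E})$; note $b\in V_2$. By Lemma~\ref{FCross}, $Cross(c)\cap C_E=\varnothing$, so every edge of $C_E$ lies in $E\setminus Cross(c)$. Starting from any vector of $\mathcal{L}$ — say $\mathbf{0}$ — apply Lemma~\ref{flip lemma} with this cut and this $b$ to obtain $\ell_1=\mathbf{0}$ and $\ell_2=\<f'_a\>_{a\in\mathcal{A}}\in\mathcal{L}$ such that $f'_u=f_u$ for all $u\in V_1=A^*_{C_E}$, $f'_e=f_e$ for all $e\in E\setminus Cross(c)$, and $f'_b\neq f_b$. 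Since $A\subseteq A^*_{C_E}=V_1$, the first property gives $\ell_1=_A\ell_2$; since $C_E\subseteq E\setminus Cross(c)$, the second gives $\ell_1=_{C_E}\ell_2$; together $\ell_1=_{A,C}\ell_2$. But $f'_b\neq f_b$ means $\ell_1\neq_{\{b\}}\ell_2$, hence $\ell_1\neq_B\ell_2$ because $b\in B$. As $C$ was an arbitrary admissible witness, Definition~\ref{info sat}(1) fails, i.e. $I_H\nvDash A\rhd_p B$, completing the contrapositive.

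The main obstacle is the $(\Leftarrow)$ direction, and specifically the bookkeeping that turns an arbitrary semantic witness $C\subseteq V\cup E$ into a purely edge-based set: one must observe that the infinite cost of vertices forces $C\cap V=\varnothing$, so that the hypergraph closure $A^*_{C}$ is well-defined and the reachability hypothesis $B\nsubseteq A^*_{C}$ can be invoked. After that, the real work has already been done inside Lemma~\ref{flip lemma}: the only remaining care is to check that the cut $c=(A^*_{C},V\setminus A^*_{C})$ makes $C$ disjoint from $Cross(c)$ (this is exactly Lemma~\ref{FCross}) and that $A$ sits on the $V_1$ side, so that the three guarantees of Lemma~\ref{flip lemma} line up precisely with $\ell_1=_{A,C}\ell_2$ and $\ell_1\neq_B\ell_2$.
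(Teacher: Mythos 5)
Your proof is correct and follows essentially the same route as the paper: the $(\Rightarrow)$ direction via Lemma~\ref{A*F lemma}, and the $(\Leftarrow)$ direction (stated contrapositively rather than by contradiction, which is only a cosmetic difference) via the observation that infinite vertex costs force the witness set into $E$, the cut $(A^*_F, V\setminus A^*_F)$, Lemma~\ref{FCross}, and Lemma~\ref{flip lemma} applied to $\mathbf{0}$. You even write the cut correctly as a partition of $V$, where the paper's text contains a small typo ($E\setminus A^*_F$ instead of $V\setminus A^*_F$).
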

\begin{proof}
\noindent$(\Rightarrow)$. Suppose that $H\Vdash A\rhd_p B$. Then, by Definition~\ref{hyper sat}, there is a subset $F\subseteq E$ such that $w(F)\le p$ and $B\subseteq A^*_F$. By Definition~\ref{price info model}, inequality $w(F)\le p$ implies that $\|F\|\le p$. Thus, by Definition~\ref{info sat}, it suffices to show that for any two vectors $\ell_1,\ell_2\in\mathcal{L}$, if $\ell_1=_{A,F}\ell_2$, then $\ell_1=_{B}\ell_2$, which, in turn, follows from statement $B\subseteq A^*_F$ and Lemma~\ref{A*F lemma}.

\noindent$(\Leftarrow)$.  Assume that $I_H\vDash A\rhd_p B$. Thus, by Definition~\ref{info sat}, there is $F\subseteq \mathcal{A}=V\cup E$ such that $\|F\|\le p$ and for all $\ell_1,\ell_2\in\mathcal{L}$, if $\ell_1=_{A,F}\ell_2$, then $\ell_1=_B \ell_2$. Note that, by Definition~\ref{price info model}, $\|F\|\le p$ implies that $F\subseteq E$ and $w(F)\le p$. Suppose now that $H\nVdash A\rhd_p B$. Thus, $B\nsubseteq A^*_F$, by Definition~\ref{hyper sat}\idot Hence, there is $b\in B$ such that $b\notin A^*_F$. To finish the proof of the lemma, it suffices to construct $\ell_1,\ell_2\in\mathcal{L}$ such that $\ell_1=_{A,F}\ell_2$ and $\ell_1\neq_b \ell_2$. 
Consider cut $c=(A^*_F,E\setminus A^*_F)$. Let $\ell_1$ be vector $\mathbf{0}\in\mathcal{L}$. By Lemma~\ref{flip lemma}, there is vector $\ell_2$ such that $\ell_1=_{A^*_F,E\setminus Cross(c) }\ell_2$, and $\ell_1\neq_b \ell_2$. Note that $A\subseteq A^*_F$ by Definition~\ref{A*F} and Definition~\ref{AkF}\idot Thus, $\ell_1=_{A,E\setminus Cross(c) }\ell_2$. By Lemma~\ref{FCross}, we have $Cross(c)\cap F=\varnothing$. In other words, $F\subseteq E\setminus Cross(c)$. Therefore, $\ell_1=_{A,F}\ell_2$. 
\end{proof}
\begin{lemma}\label{induction lemma}
$H\Vdash\psi$ if and only if $I_H\vDash\psi$, for each $\phi\in\Phi(V)$.
\end{lemma}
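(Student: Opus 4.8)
The plan is to prove Lemma~\ref{induction lemma} by induction on the structural complexity of the formula $\psi\in\Phi(V)$, exactly mirroring the proof of Lemma~\ref{hyper induction lemma}. This is the standard ``truth lemma'' argument, and all the real work has already been isolated into the base case.

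First I would handle the base case, where $\psi$ has the form $A\rhd_p B$ for finite sets $A,B\subseteq V$ and a non-negative real $p$. But this is precisely the content of Lemma~\ref{base case}, which states $H\Vdash A\rhd_p B$ if and only if $I_H\vDash A\rhd_p B$. So the base case is immediate.

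Next I would handle the two inductive cases. If $\psi$ is $\neg\chi$, then by Definition~\ref{hyper sat} we have $H\Vdash\neg\chi$ iff $H\nVdash\chi$, which by the induction hypothesis holds iff $I_H\nvDash\chi$, which by Definition~\ref{info sat} holds iff $I_H\vDash\neg\chi$. If $\psi$ is $\chi\to\sigma$, then again by Definition~\ref{hyper sat} and Definition~\ref{info sat} (the clauses for implication in both semantics have the same Boolean shape) we have $H\Vdash\chi\to\sigma$ iff ($H\nVdash\chi$ or $H\Vdash\sigma$), which by the induction hypothesis applied to $\chi$ and to $\sigma$ holds iff ($I_H\nvDash\chi$ or $I_H\vDash\sigma$), which holds iff $I_H\vDash\chi\to\sigma$.

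There is no genuine obstacle here: the entire difficulty of connecting the hypergraph semantics to the informational semantics has been concentrated in Lemma~\ref{base case}, whose proof relies on the closure-preservation Lemma~\ref{A*F lemma} in one direction and the delicate cut-limited inverted tree construction of Lemma~\ref{flip lemma} in the other. Since $\Phi(V)$ is generated from atomic formulas by negation and implication only, and both satisfiability relations treat these connectives classically, the induction goes through routinely. I would write the proof in one short paragraph: ``We prove the lemma by induction on the structural complexity of formula $\psi$. The case when $\psi$ has the form $A\rhd_p B$ follows from Lemma~\ref{base case}. The cases when $\psi$ is a negation or an implication follow from Definition~\ref{hyper sat}, Definition~\ref{info sat}, and the induction hypothesis in the standard way.''
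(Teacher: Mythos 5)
Your proof is correct and takes exactly the same route as the paper: induction on the structural complexity of $\psi$, with the base case delegated to Lemma~\ref{base case} and the Boolean cases handled routinely from Definitions~\ref{hyper sat} and~\ref{info sat}. The paper's own proof is precisely the one-paragraph version you describe at the end.
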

\begin{proof}
We prove the lemma by induction on the structural complexity of formula $\psi$. The base case is shown in Lemma~\ref{base case}. The induction step follows from the induction hypothesis, Definition~\ref{hyper sat}, and Definition~\ref{info sat}\idot
\end{proof}

In the preceding part of this section, given any hypergraph $H$, we constructed a corresponding informational model $I_H$ and proved properties of this informational model. Next, we state and prove the completeness theorem for the informational semantics.  

\begin{theorem}\label{completeness}
For each set $\mathcal{A}$ and each formula $\phi\in\Phi(\mathcal{A})$, if $I\vDash\phi$ for each informational model $I=\<\mathcal{A},\{D_a\}_{a\in \mathcal{A}},\|\cdot\|,\mathcal{L}\>$, then $\vdash\phi$.
\end{theorem}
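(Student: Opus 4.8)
The plan is to argue by contraposition, assembling results already proved in this section: assuming $\nvdash\phi$, I will produce an informational model in which $\phi$ fails. By the point at which we reach this theorem, the genuinely hard part of the argument --- the cut-limited inverted-tree and one-time-pad construction of Lemma~\ref{flip lemma}, and through it the nontrivial direction of Lemma~\ref{base case} --- is already behind us, so the proof of the theorem itself is essentially packaging.

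First I would restrict attention to the finite set of attributes occurring in $\phi$. Let $V\subseteq\mathcal{A}$ be this finite set, so $\phi\in\Phi(V)$. Since every derivation in the system over $\Phi(V)$ is verbatim a derivation over $\Phi(\mathcal{A})$ (the axiom instances and propositional tautologies of the smaller language are among those of the larger), the hypothesis $\nvdash\phi$ entails that $\phi$ is still not derivable once the language is cut down to $\Phi(V)$. Applying Theorem~\ref{hyper completeness} in contrapositive form to $V$ and $\phi$ then yields a hypergraph $H=\<V,E,in,out,w\>$ with $H\nVdash\phi$.

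Next I would feed $H$ into the construction occupying the rest of this section: from $H$ I build the informational model $I_H$ of Definitions~\ref{price info model}, \ref{Da def}, and~\ref{canonical L}, whose set of attributes is $V\cup E$. Lemma~\ref{induction lemma} gives $H\Vdash\psi$ if and only if $I_H\vDash\psi$ for every $\psi\in\Phi(V)$; taking $\psi=\phi$ yields $I_H\nvDash\phi$, so $\phi$ is refuted by an informational model, which is what completeness requires. If one additionally wants the refuting model's attribute universe to contain all of $\mathcal{A}$, one adjoins each attribute of $\mathcal{A}\setminus V$ with a one-element domain; such an attribute is constant on $\mathcal{L}$, so including it in or excluding it from a candidate witnessing set $C$ never affects any dependency among attributes of $V$, and $\phi$, which mentions only attributes of $V$, continues to fail.

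I do not expect a real obstacle in the theorem itself; its difficulty is entirely inherited from the lemmas it cites, above all Lemma~\ref{flip lemma}, whose proof already supplies the delicate encryption-based counter-model. The only thing that must be checked rather than merely quoted is the soundness of the bookkeeping around the language reduction: one must confirm that neither shrinking the language to $\Phi(V)$ nor re-enlarging the attribute set can turn a previously refuted atomic statement $A\rhd_p B$ into a satisfied one by providing a cheaper witnessing set $C$. This is routine once one observes that the only attributes ever added are the edge-attributes already built into $I_H$ (whose costs are the edge weights, as fixed in Definition~\ref{price info model}) together with constant dummies, and neither kind can help determine an attribute of $V$.
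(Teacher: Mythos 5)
Your proof follows the paper's own argument exactly: contrapose, invoke Theorem~\ref{hyper completeness} to obtain a hypergraph $H$ with $H\nVdash\phi$, and transfer the failure of $\phi$ to the informational model $I_H$ via Lemma~\ref{induction lemma}. The extra bookkeeping you supply --- cutting the language down to the finite set $V$ of attributes occurring in $\phi$ (needed since Definition~\ref{hyper model} requires a finite vertex set) and padding the attribute universe back up to $\mathcal{A}$ with constant dummies --- is care that the paper's two-line proof silently omits, and your handling of it is correct.
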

\begin{proof}
Assume that $\nvdash\phi$. By Theorem~\ref{hyper completeness}, there is a hypergraph $H=\<V,E,in,out,w\>$ such that  $\phi\in\Phi(V)$ and $H\nVdash\phi$.  Therefore, $I_H\nvDash\phi$ by Lemma~\ref{induction lemma}. 
\end{proof}

\section{Completeness theorem for the finite informational semantics}\label{finite info semantics completeness}

In the previous section, we have shown the completeness of our logical system with respect to the informational semantics. In Definition~\ref{finite model}, we introduced the notion of a finite informational model as a model in which each attribute has a finite cost. In this section we prove the completeness of our logical system with respect to the class of finite models. This is achieved by showing how any (potentially infinite) informational model could be converted to a finite model through the construction described below.

\begin{definition}\label{Ir}
For any non-negative real number $r$ and any informational model $I=\<\mathcal{A},\{D_a\}_{a\in \mathcal{A}},\|\cdot\|,\mathcal{L}\>$, let $I^r$ be tuple $\<\mathcal{A},\{D_a\}_{a\in \mathcal{A}},\|\cdot\|^r,\mathcal{L}\>$, where
$$
\|c\|^r=
\begin{cases}
\|c\|, & \mbox{ if $\|c\|\le r$},\\
r, & \mbox{otherwise}.
\end{cases}
$$
for each attribute $c\in\mathcal{A}$.
\end{definition}
\begin{corollary}
For any non-negative real number $r$ and any informational model $I$, tuple $I^r$ is a finite informational model. 
\end{corollary}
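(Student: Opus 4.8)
The plan is to verify directly that $I^r$ satisfies the four conditions of Definition~\ref{info model} together with the additional requirement of Definition~\ref{finite model}. First I would observe that $I^r$ shares three of its four components with $I$: the attribute set $\mathcal{A}$, the family of domains $\{D_a\}_{a\in\mathcal{A}}$, and the set of legitimate vectors $\mathcal{L}$ are taken over verbatim from $I$ in Definition~\ref{Ir}. Since $I$ is an informational model, conditions 1, 2, and 4 of Definition~\ref{info model} already hold for these components, hence they hold for $I^r$ as well.

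It then remains to check condition 3, namely that $\|\cdot\|^r$ maps each attribute into a non-negative real number or $+\infty$. I would fix an arbitrary attribute $c\in\mathcal{A}$ and split into the two cases of Definition~\ref{Ir}: if $\|c\|\le r$ then $\|c\|^r=\|c\|$, and if $\|c\|>r$ then $\|c\|^r=r$. In either case $\|c\|^r\le r$, so in particular $\|c\|^r<+\infty$. Moreover $\|c\|^r\ge 0$, since $\|c\|\ge 0$ because $\|\cdot\|$ is a cost function for the informational model $I$, and $r\ge 0$ by hypothesis. Thus $\|c\|^r$ is a non-negative real number, which establishes condition 3 and shows that $I^r$ is an informational model.

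Finally, the inequality $\|c\|^r\le r<+\infty$ derived above holds for every attribute $c\in\mathcal{A}$, which is precisely the defining condition of a finite informational model in Definition~\ref{finite model}. Therefore $I^r$ is a finite informational model. I do not expect any genuine obstacle: the only point worth spelling out is that capping a (possibly infinite) cost at the finite value $r$ simultaneously preserves non-negativity and forces finiteness, while every remaining ingredient of the model is inherited unchanged from $I$.
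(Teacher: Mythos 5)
Your proof is correct and follows the same direct verification the paper leaves implicit (the corollary is stated without proof as an immediate consequence of Definition~\ref{Ir}): the only point requiring any argument is that $\|c\|^r\le r<+\infty$ and $\|c\|^r\ge 0$ in both cases of the definition, which you establish. Nothing is missing.
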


\begin{corollary}\label{le corollary}
$\|c\|^r\le \|c\|$, for each non-negative real number $r$, each informational model $I=\<\mathcal{A},\{D_a\}_{a\in \mathcal{A}},\|\cdot\|,\mathcal{L}\>$, and each attribute $c\in\mathcal{A}$.
\end{corollary}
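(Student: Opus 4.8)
The plan is to argue directly from Definition~\ref{Ir}, by a case split on whether the original cost $\|c\|$ exceeds the truncation threshold $r$. In the first case, if $\|c\|\le r$, then Definition~\ref{Ir} gives $\|c\|^r=\|c\|$, so the inequality $\|c\|^r\le\|c\|$ holds trivially. In the second case, if $\|c\|>r$ — a situation that also subsumes $\|c\|=+\infty$, since $r$ is a non-negative real and hence strictly below $+\infty$ — then Definition~\ref{Ir} gives $\|c\|^r=r$, and the case hypothesis $r<\|c\|$ immediately yields $\|c\|^r\le\|c\|$. Combining the two cases establishes the corollary for every attribute $c\in\mathcal{A}$, every informational model $I$, and every non-negative real $r$.

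The only point that needs a moment's attention is the bookkeeping around the value $+\infty$: attributes of infinite cost are precisely those handled by the second case above, where the truncated cost is the finite real number $r$ and the comparison $r\le+\infty$ is immediate. Consequently I do not anticipate any genuine obstacle here; the statement is essentially a one-line unfolding of the truncation rule in Definition~\ref{Ir}. Its purpose is preparatory: together with the companion observation that witnessing sets for budget-constrained dependencies can only become cheaper when costs are truncated, it will be used in the sequel to transfer non-satisfiability of a formula $\phi$ from an arbitrary model $I$ to the finite model $I^r$, thereby yielding the completeness theorem for the finite informational semantics.
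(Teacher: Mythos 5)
Your case split on Definition~\ref{Ir} is exactly the intended argument; the paper states this as an immediate corollary of the truncation rule and gives no separate proof. Your proposal is correct and matches the paper's (implicit) reasoning, including the handling of the $\|c\|=+\infty$ case.
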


\begin{definition}\label{rank}
For any $\phi\in\Phi(\mathcal{A})$, let  $rank(\phi)$ be defined recursively as follows:
\begin{enumerate}
\item $rank(A\rhd_p B)=p$,
\item $rank(\neg\psi)=rank(\psi)$,
\item $rank(\psi\to\chi)=max(rank(\psi),rank(\chi))$.
\end{enumerate}
\end{definition}

\begin{lemma}\label{rank lemma}
If $rank(\phi)< r$, then $I^r\vDash\phi$ if and only if $I\vDash\phi$.
\end{lemma}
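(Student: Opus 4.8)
The plan is to prove this by induction on the structural complexity of $\phi$, exactly paralleling the structure of Lemma~\ref{induction lemma}. The only nontrivial case is the base case where $\phi$ has the form $A\rhd_p B$; the Boolean cases $\neg\psi$ and $\psi\to\chi$ follow immediately from the induction hypothesis together with Definition~\ref{rank} (note that $rank(\phi)<r$ implies $rank(\psi)<r$ and $rank(\chi)<r$ in those cases) and Definition~\ref{info sat}, since $I$ and $I^r$ share the same $\mathcal{A}$, the same domains $\{D_a\}$, and the same $\mathcal{L}$.

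For the base case, suppose $rank(A\rhd_p B)=p<r$. First I would prove the direction $I\vDash A\rhd_p B \Rightarrow I^r\vDash A\rhd_p B$: if there is a finite $C\subseteq\mathcal{A}$ with $\|C\|\le p$ witnessing the dependency in $I$, then by Corollary~\ref{le corollary} we have $\|C\|^r=\sum_{c\in C}\|c\|^r\le\sum_{c\in C}\|c\|=\|C\|\le p$, and since $\mathcal{L}$ and the $=_X$ relations are unchanged, the same $C$ witnesses $I^r\vDash A\rhd_p B$. For the converse direction $I^r\vDash A\rhd_p B\Rightarrow I\vDash A\rhd_p B$: suppose there is a finite $C\subseteq\mathcal{A}$ with $\|C\|^r\le p$ such that for all $\ell_1,\ell_2\in\mathcal{L}$, $\ell_1=_{A,C}\ell_2$ implies $\ell_1=_B\ell_2$. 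The key observation is that any attribute $c\in C$ with $\|c\|>r$ would already contribute $\|c\|^r=r>p$ to $\|C\|^r$, contradicting $\|C\|^r\le p$; hence every $c\in C$ has $\|c\|\le r$, so $\|c\|^r=\|c\|$ for each $c\in C$, and therefore $\|C\|=\|C\|^r\le p$. The same set $C$ then witnesses $I\vDash A\rhd_p B$.

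The main (and really only) obstacle here is the converse direction of the base case, and specifically recognizing that the strict inequality $p<r$ in the hypothesis is exactly what is needed: it guarantees that no ``capped'' attribute (one whose true cost exceeds $r$ but whose capped cost is merely $r$) can appear in an affordable witness set $C$, because a single such attribute already costs $r>p$. This is the reason the lemma requires $rank(\phi)<r$ rather than $rank(\phi)\le r$. Once this point is in place, everything else is a routine unwinding of Definition~\ref{info sat}, Definition~\ref{||set}, Definition~\ref{Ir}, and Definition~\ref{rank}, and the induction goes through without complication.
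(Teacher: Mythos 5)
Your proposal is correct and follows essentially the same route as the paper: induction on structural complexity with the Boolean cases immediate, the easy direction via Corollary~\ref{le corollary}, and the converse direction via the observation that $\|C\|^r\le p<r$ forces $\|c\|^r<r$ and hence $\|c\|=\|c\|^r$ for every $c\in C$. Your remark on why the strict inequality $rank(\phi)<r$ is essential matches the role it plays in the paper's argument.
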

\begin{proof}
We prove the lemma by induction on the structural complexity of formula $\phi$. The inductive step immediately follows from Definition~\ref{info sat}\idot Now, suppose that formula $\phi$ has form $A\rhd_p B$.

\noindent{$(\Rightarrow)$} If $I^r\vDash A\rhd_p B$, then, by Definition~\ref{info sat}, there is a set $C\subseteq \mathcal{A}$ such that (i) $\|C\|^r\le p$ and (ii) for each $\ell_1,\ell_2\in\mathcal{L}$ if $\ell_1=_{A,C}\ell_2$, then $\ell_1=_B\ell_2$. From condition (i), Definition~\ref{rank}, and assumption $rank(\phi)< r$ of the lemma,
\begin{equation}\label{Cr<r}
\|C\|^r\le p = rank(\phi)< r.
\end{equation}
Hence, $\|c\|^r\le \|C\|^r<r$ for each $c\in C$, by Definition~\ref{||set}\idot Thus, by Definition~\ref{Ir}, we have $\|c\|=\|c\|^r$. Then, by Definition~\ref{||set} and the first inequality in statement~(\ref{Cr<r}),
$$
\|C\|=\sum_{c\in C}\|c\|=\sum_{c\in C}\|c\|^r=\|C\|^r\le p.
$$
By Definition~\ref{info sat}, the inequality $\|C\|\le p$ together with condition (ii) above implies that $I\vDash A\rhd_p B$.

\noindent{$(\Leftarrow)$} If $I\vDash A\rhd_p B$, then, by Definition~\ref{info sat}, there is a set $C\subseteq \mathcal{A}$ such that (iii) $\|C\|\le p$ and (iv) for each $\ell_1,\ell_2\in\mathcal{L}$ if $\ell_1=_{A,C}\ell_2$, then $\ell_1=_B\ell_2$.
By Definition~\ref{||set}, Corollary~\ref{le corollary}, and again Definition~\ref{||set},
$$\|C\|^r=\sum_{c\in C}\|c\|^r\le \sum_{c\in C}\|c\| = \|C\|.$$
This along with condition (iii) implies that $\|C\|^r\le p$. Therefore, by Definition~\ref{info sat} and condition (iv), we have $I^r\vDash A\rhd_p B$.
\end{proof}

We next state and prove the completeness theorem for the class of finite informational models.

\begin{theorem}\label{finite completeness}
If $I\vDash\phi$ for each finite informational model $I=\<\mathcal{A},\{D_a\}_{a\in \mathcal{A}},\|\cdot\|,\mathcal{L}\>$ such that $\phi\in\Phi(\mathcal{A})$, then $\vdash\phi$.
\end{theorem}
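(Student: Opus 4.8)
The plan is to reduce the finite-model completeness statement to the (already proved) completeness for the full informational semantics, Theorem~\ref{completeness}, by contraposition. So assume $\nvdash\phi$; I want to produce a \emph{finite} informational model refuting $\phi$. By Theorem~\ref{completeness} there is some informational model $I=\<\mathcal{A},\{D_a\}_{a\in\mathcal{A}},\|\cdot\|,\mathcal{L}\>$ with $\phi\in\Phi(\mathcal{A})$ and $I\nvDash\phi$. This $I$ need not be finite, since some attributes may have cost $+\infty$. The idea is to cap all costs at a threshold $r$ chosen large enough that the capping cannot affect the truth value of $\phi$, and then invoke Lemma~\ref{rank lemma}.

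Concretely, the key steps are: first, pick $r$ with $rank(\phi)<r$; any $r:=rank(\phi)+1$ works since $rank(\phi)$ is a well-defined non-negative real by Definition~\ref{rank}. Second, form the model $I^r$ as in Definition~\ref{Ir}; by the Corollary immediately following Definition~\ref{Ir}, $I^r$ is a finite informational model (every attribute now has cost at most $r<+\infty$), and it has the same attribute set $\mathcal{A}$, so $\phi\in\Phi(\mathcal{A})$ still holds. Third, apply Lemma~\ref{rank lemma} with this $r$: since $rank(\phi)<r$, we get $I^r\vDash\phi$ iff $I\vDash\phi$. As $I\nvDash\phi$, this yields $I^r\nvDash\phi$. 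Thus $I^r$ is a finite informational model with $\phi\in\Phi(\mathcal{A})$ that does not satisfy $\phi$, which is exactly the negation of the hypothesis of the theorem. Hence, by contraposition, if $\phi$ is satisfied in every finite informational model then $\vdash\phi$.

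There is essentially no hard part here: the real work was done in Theorem~\ref{completeness} (via the hypergraph detour) and in Lemma~\ref{rank lemma}. The only things to be careful about are bookkeeping: making sure the capped model genuinely remains an informational model (the domains and $\mathcal{L}$ are untouched by the cost-capping, so Definition~\ref{info model} is still met), that the language $\Phi(\mathcal{A})$ is unchanged (it depends only on $\mathcal{A}$, not on $\|\cdot\|$), and that the strict inequality $rank(\phi)<r$ needed by Lemma~\ref{rank lemma} is arranged rather than the non-strict one. If anything counts as a subtlety, it is just noting that $rank(\phi)$ is finite so that such an $r$ exists; this is immediate from the recursive definition since $\phi$ is a finite formula built from finitely many atomic predicates $A\rhd_p B$ each contributing a real number $p$.
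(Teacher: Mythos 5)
Your proposal is correct and follows exactly the paper's own argument: contrapose, obtain a (possibly infinite) countermodel $I$ from Theorem~\ref{completeness}, pick $r$ with $rank(\phi)<r$, and apply Lemma~\ref{rank lemma} to conclude $I^r\nvDash\phi$ for the finite model $I^r$. The additional bookkeeping remarks (finiteness of $rank(\phi)$, invariance of $\Phi(\mathcal{A})$ under cost-capping) are sound and merely make explicit what the paper leaves implicit.
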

\begin{proof}
Suppose that $\nvdash\phi$. Thus, by Theorem~\ref{completeness}, there is an informational model $I=\<\mathcal{A},\{D_a\}_{a\in \mathcal{A}},\|\cdot\|,\mathcal{L}\>$ such that $I\nvDash\phi$. Pick any $r$ such that $rank(\phi)<r$. Then, $I^r\nvDash\phi$ by Lemma~\ref{rank lemma}.
\end{proof}

\section{Conclusion}\label{conclusion section}

In this article we introduced a notion of the budget-constrained dependency that generalizes the notion of functional dependency previously studied  by~\cite{a74}. We propose a sound and complete axiomatization that captures the properties of the budget-constrained dependency. Although the axioms of our system are generalizations of the original Armstrong's axioms, the proof of the completeness for our system is significantly more complicated than its Armstrong's counterpart.  

\bibliography{sp}
\vspace*{10pt}

\end{document}